\newtheorem{theorem}{Theorem}[section]
\newtheorem{corollary}[theorem]{Corollary}
\newtheorem{definition}[theorem]{Definition}
\newtheorem{example}[theorem]{Example}
\newtheorem*{proof}{Proof}
\begin{document}

\title{\huge gMark: Schema-Driven Generation of Graphs and Queries}

\author{Guillaume Bagan, Angela Bonifati, Radu Ciucanu, George H.\,L.\ Fletcher, Aurélien Lemay, Nicky Advokaat

\IEEEcompsocitemizethanks{
\IEEEcompsocthanksitem G.\ Bagan and A.\ Bonifati are with Université Lyon 1, Bat.\ Nautibus, 23-25 av.\ Pierre de Coubertin, 69622 Villeurbanne Cedex, France.\protect\\
{\tt \{guillaume.bagan,angela.bonifati\}@liris.cnrs.fr}
\IEEEcompsocthanksitem R.\ Ciucanu is with Université Blaise Pascal, Clermont-Ferrand, 1 rue de la Chebarde, 63178 Aubière Cedex, France. 
Research partially done at the University of Oxford, supported by EPSRC platform grant DBOnto.\protect\\
{\tt ciucanu@isima.fr}
\IEEEcompsocthanksitem G.\,H.\,L.\ Fletcher and N. Advokaat are with TU Eindhoven, P.O. Box 513, 5600 MB Eindhoven, The Netherlands.\protect\\
{\tt \{g.h.l.fletcher@,n.advokaat@student.\}tue.nl}
\IEEEcompsocthanksitem A.\ Lemay is with Université Lille 3 \& Inria, Parc Scientifique de la Haute Borne, 40 av. Halley, 59650
Villeneuve d'Ascq, France.\protect\\
{\tt aurelien.lemay@inria.fr}
}
}

\IEEEtitleabstractindextext{
\begin{abstract}\justifying
Massive graph data sets are pervasive in contemporary application domains. 
Hence, graph database systems are becoming increasingly important. 
In the experimental study of these systems, it is vital that the research community has shared solutions for the generation of database instances and query workloads having predictable and controllable properties. 
In this paper, we present the design and engineering principles of $\mathsf{gMark}$, a domain- and query language-independent graph instance and query workload generator.
A core contribution of $\mathsf{gMark}$ is its ability to target and control the diversity of properties of both the generated instances and the generated workloads coupled to these instances.
Further novelties include support for regular path queries, a fundamental graph query paradigm, and schema-driven selectivity estimation of queries, a key feature in controlling workload chokepoints. 
We illustrate the flexibility and practical usability of $\mathsf{gMark}$ by showcasing the framework's capabilities in generating high quality graphs and workloads, and its ability to encode user-defined schemas across a variety of application domains.
\end{abstract}

\begin{IEEEkeywords}
Graph databases, Selectivity estimation, Recursive queries, Benchmarking.
\end{IEEEkeywords}
}

\maketitle

\section{Introduction}

{\smallskip\noindent\bf The problem.}
We study the problem of schema-driven generation of synthetic
graph instances and corresponding query workloads for use in experimental
analysis of graph database systems.  Our study is motivated by the ubiquity of
graph data in modern application domains, such as social and biological
networks and geographic databases, to name a few.
In response to these pressures, systems that can handle massive
graph-structured data sets are under intense active research and development.
These systems span from pure graph database systems to more focused knowledge
representation systems.  Native graph databases such as Neo4j~\cite{neo4j} and
Sparksee~\cite{sparksee} propose their own declarative data model and query
language, with particular attention to query optimization, and space and
performance.  In contrast to this trend of specialized systems, general-purpose
systems such as LogicBlox~\cite{ArefCGKOPVW15} rely on declarative solutions
that can cover a broader range of use cases.  Furthermore, knowledge
representation systems such as Virtuoso~\cite{virtuoso} implement the standard
RDF graph data model and SPARQL query language to handle complex navigational
and recursive queries on large-scale Semantic Web data.  

Keeping pace with these developments, synthetic graph and query generation and
benchmarking solutions for graph data management systems has been a
proliferating activity, which started within the Semantic Web community
(e.g.,~\cite{AlucHOD14,SchmidtHLP09}) and recently within the database
community, e.g.,  through the extensive work of the LDBC Council~\cite{Erling2015}.
The latter has been the first to propose a chokepoint-driven design of graph
database benchmarks, which allows fine-grained control of chokepoints of
queries and all aspects of the involved data. 
By relying on a fixed schema and a carefully designed set of benchmark queries,
this lets the community focus on
crucial features of query optimization and/or parallel processing, and decreases
the confusion and the incomparable results of evaluated systems. 

{\smallskip\noindent\bf Our solution.}
We propose a complementary approach in which the focus is not on the design of
individual queries but rather on whole query workloads.  This approach relies
on the control of diversity of both graph schemas and query workloads, which
lets us vary the structural properties of data as well as tailor the generated
queries to a particular domain or application.  We emphasize that a
workload-centric approach primarily targets different benchmarking and
experimental scenarios from the query-centric approach of current benchmarks.
Indeed, we believe that this approach is important 
in contexts in which multiple queries (i.e., those belonging to a query
workload) need to be considered altogether, which occurs in many cases, such as in multi-query
optimization, workload-driven database tuning, streaming applications,
mapping discovery, and query rewriting in data integration systems. 

In this paper, we present $\mathsf{gMark}$, a fully-functional practical 
framework that realizes this workload-centric perspective.  $\mathsf{gMark}$ takes a
schema-driven approach to the flexible and tightly-controlled generation of
synthetic graph instances coupled with sophisticated query workloads.  
The complete framework is provided as {\em open-source
software}\footnote{\url{https://github.com/graphMark/gmark}} ready for use in the
graph processing community.

We further note that the core of query workload generation in $\mathsf{gMark}$ revolves around a
novel method we propose here for selectivity estimation for graph queries,
which is, to the best of our knowledge, the first of its kind.  This method is
of independent interest, applicable to other contexts such as query
optimization and query inference on graphs and, more generally, relational data
and queries. 

\begin{figure*}
  \centering
  \begin{tikzpicture}[>=stealth,every node/.style={shape=rectangle,draw,rounded corners,align=center},]

\node (gc) [xshift=-0.9cm,fill=gray!5] {
\begin{tabular}{l|l}
\multicolumn{2}{l}{\sf Graph configuration} \\\hline\hline
Size (\#\,of nodes) & $n$\\\hline
Edge predicates & $\Sigma = \{a,b,\ldots\}$\\\hline
Node types & $\Theta = \{T_1,T_2,\ldots\}$\\\hline
Occurrence constraints &  $\mathcal{T}$ (for predicates and types)\hspace*{-0.4cm}\\\hline
Degree distributions & $\eta(T_1,T_2,a)=(D_{\mathit{in}}, D_{\mathit{out}})$
\end{tabular}
};

\node (qc) [xshift=-0.9cm,yshift=-3.3cm,fill=gray!5] {
\begin{tabular}{l|l}
\multicolumn{2}{l}{\sf Query workload configuration} \\\hline\hline
Workload size (\# of queries) & $\#q$\\\hline
Arity (0, 1, 2, etc.) & {\em ar}\\\hline
Shape (chain, star, cycle, star-chain) & $f$\\\hline
Selectivity (constant, linear, quadratic) & $e$\\\hline
Probability of recursion & $p_r$\\\hline
Query size (\# of conjuncts, \# of disjuncts, etc.) & $t$\\
\end{tabular}
};

\node (c2) [xshift=6.2cm, yshift=-0.5cm, rounded corners=0cm,fill=blue!15]{\large \textsf{gMark}\\ \textsf{Graph\&query generator}};

\node (c3) [xshift=10.3cm, yshift=-0.5cm, fill=yellow!10]{\textsf{Graph instance file}} ;

\node (c4) [xshift=6.2cm, yshift=-2.2cm,fill=yellow!10]{\textsf{Query workload file} \\ \textsf{(UCRPQs as XML)}};

\node (c5) [xshift=10.1cm, yshift=-2.2cm, rounded corners=0cm,fill=blue!15]{\large \textsf{gMark} \\ \textsf{Query translator}};

\node (c6) [xshift=6.15cm,yshift=-4cm,fill=red!25] {\sf SPARQL};
\node (c7) [xshift=8.05cm,yshift=-4cm,fill=red!25] {\sf openCypher};
\node (c8) [xshift=10.2cm,yshift=-4cm,fill=red!25] {\sf PostgreSQL};
\node (c9) [xshift=12cm,yshift=-4cm,fill=red!25] {\sf Datalog};

	\draw[decoration={markings,mark=at position 1 with {\arrow[scale=2]{>}}}, postaction={decorate}, shorten >=0.4pt] (gc) to (c2);
	\draw[decoration={markings,mark=at position 1 with {\arrow[scale=2]{>}}}, postaction={decorate}, shorten >=0.4pt] (qc) to (c2);
	\draw[decoration={markings,mark=at position 1 with {\arrow[scale=2]{>}}}, postaction={decorate}, shorten >=0.4pt] (c2) -- (c4);
	\draw[decoration={markings,mark=at position 1 with {\arrow[scale=2]{>}}}, postaction={decorate}, shorten >=0.4pt] (c2) -- (c3);
	\draw[decoration={markings,mark=at position 1 with {\arrow[scale=2]{>}}}, postaction={decorate}, shorten >=0.4pt] (c4) -- (c5);
	\draw[decoration={markings,mark=at position 1 with {\arrow[scale=2]{>}}}, postaction={decorate}, shorten >=0.4pt] (c5) -- (c6);
	\draw[decoration={markings,mark=at position 1 with {\arrow[scale=2]{>}}}, postaction={decorate}, shorten >=0.4pt] (c5) -- (c7);
	\draw[decoration={markings,mark=at position 1 with {\arrow[scale=2]{>}}}, postaction={decorate}, shorten >=0.4pt] (c5) -- (c8);
	\draw[decoration={markings,mark=at position 1 with {\arrow[scale=2]{>}}}, postaction={decorate}, shorten >=0.4pt] (c5) -- (c9);
  \end{tikzpicture}
  \caption{Overview of the $\mathsf{gMark}$ workflow.}
  \label{fig:gmark_arch}
\end{figure*}

\subsection{gMark design principles} 
To support a broad range of systems and domains, we aim in the $\mathsf{gMark}$ design to cover
features and capabilities commonly found in graph query processing, graph
analytics, and schema validation. Reflecting this, the interplay between the
following key features characterizes the $\mathsf{gMark}$ system architecture. To the best of our
knowledge, $\mathsf{gMark}$ is the first framework to satisfy all of these criteria.

{\em Built-in support for schema definition.} A major goal of $\mathsf{gMark}$ is
to account for an array of fundamental user-defined schema constraints during
graph generation.  
In general, $\mathsf{gMark}$ is domain-independent, while it can be used to target a rich variety of realistic domains. 
Hence, graph instance generation leverages an optional schema
definition, called a {\em graph configuration}, which includes the enumeration
of predicates (i.e., edge labels) and node types (i.e., node labels) 
occurring in the data, along with their properties in generated instances (see Fig.~\ref{fig:gmark_arch}).

{\em Controlled instance and workload diversity.} Given a graph configuration, a
subsequent challenge is to exploit it for query workload generation.
Current approaches in query generation rely on the graph instances
to generate queries with desired behavior.
However, this approach is unfeasible for large and loosely
structured networks. We argue that query workload generation must primarily
rely on the graph configuration rather than on the generated graph instances,
while still enforcing the desired behavior of the
generated queries.  Towards this, $\mathsf{gMark}$ supports a broad range of schema-driven parameters in both the graph
and query workload generator (see Fig.~\ref{fig:gmark_arch}), e.g., capable 
of gauging both navigational and recursive query execution performance,
while not relying on a fixed set of query templates. 
To the best of our knowledge, existing solutions for graph
databases do not support such configurable (and easily extensible) range of parameters.

{\em Language- and system-independence.} Another
important design principle of $\mathsf{gMark}$ is 
independence from particular query language syntaxes or systems.
Towards this, $\mathsf{gMark}$ supports various practical output formats for the graphs and for the queries, including N-triples for data,
and SQL, SPARQL, Datalog and openCypher as concrete query language syntaxes for query workloads (see Fig.~\ref{fig:gmark_arch}).  $\mathsf{gMark}$ is also
easily extensible to support other output formats. 

{\em Broad applicability.} Finally, $\mathsf{gMark}$ aims to broadly support a
wide range of application domains.  As an example, we show below 
that it is easy to adapt the scenarios (modulo incomparable features) of
three existing state-of-the-art benchmarks into meaningful $\mathsf{gMark}$ configurations, while also adding new $\mathsf{gMark}$ features: the LDBC Social
Network Benchmark~\cite{Erling2015}, SP2Bench~\cite{SchmidtHLP09}, and the
Waterloo SPARQL Diversity Test Suite (WatDiv)~\cite{AlucHOD14}.

\subsection{Contributions and organization}\label{sec:contributions:organization}
We depict an overview of the $\mathsf{gMark}$ workflow in Fig.~\ref{fig:gmark_arch}. 
The goal of this paper is to present the design, engineering, and first empirical study of the $\mathsf{gMark}$ framework. 
We next outline our main contributions.

$\bullet$ We formalize the problems of graph generation and query workload generation, and show that they are intractable in general (Section~\ref{sec:generation-problem}).

$\bullet$ We provide an in-depth presentation of the $\mathsf{gMark}$ design principles, for the generation of both graphs (Section~\ref{sec:graphgen}) and query workloads (Section~\ref{sec:query-generation-selectivity}).
The most notable novel features are support for recursive queries and query selectivity estimation in the generated query workloads.

$\bullet$ We empirically show the capability of $\mathsf{gMark}$ to cover diverse graphs and query workloads, the accuracy of the estimated selectivities, and the scalability of the generator (Section~\ref{sec:gmarkStudy}).

$\bullet$ We present an in-depth experimental comparison of a representative selection of state-of-the-art graph query engines using $\mathsf{gMark}$, which brings to light important limitations of current graph query processing engines, in particular w.r.t.\ recursive query processing (Section~\ref{sec:databasesStudy}).

We close this section by noting that $\mathsf{gMark}$ supports the full range of data and query features, and practical query syntaxes discussed above. 
 Query selectivity tuning is supported only on binary queries (i.e., queries of arity two).  
 This should not be considered as a limitation since already selectivity tuning is a non-trivial problem for binary queries, and such queries already make interesting major practical cases of query benchmarking in graph databases. For example, all regular path queries (which appear as ``property paths'' in SPARQL~1.1) are binary.

\section{Related work}\label{sec:relatedwork}

Data generation and benchmarking frameworks have played an important role in database systems
research over the last decades,  where efforts such as the TPC Benchmarks and
XML benchmarking suites have been crucial for advancing the state of the art
\cite{BarbosaMY09b,Gray93}.  
Similarly, in support of the experimental study of graph data management solutions, a variety of
synthetic graph tools such as SP2Bench~\cite{SchmidtHLP09}, LDBC~\cite{Erling2015}, LUBM~\cite{GuPaHe05}, BSBM~\cite{BizerS09}, Grr~\cite{BlumC11} and WatDiv~\cite{AlucHOD14} have been
developed in the research community.  
Complementary to the $\mathsf{gMark}$ approach,  
application-driven derivation of graph configurations and synthetic graph instances 
has also recently been studied \cite{Apples11,QiaoO15,ZhangT16}.
Furthermore, extensive
collections of real-world graphs such as
SNAP~\cite{snap} and KONECT~\cite{konect}
are now available as community resources.

Currently available resources 
(i) rely on fixed graphs or
instances of fixed
graph schemas, or (ii) provide limited or no support for generating tailored
query workloads to accompany graph instances.  
These aspects are difficult to jointly relax,
especially in the context of loosely structured complex
networks.  Indeed, there is no community consensus on schema formalisms for
graph data, an area that is still in an early stage of investigation~\cite{BonevaGHPSS14}.  Furthermore, constructing workloads with given selectivity
and structural features is a very difficult problem \cite{ArasuKL11,BrunoCT06,gb2014,LoBKOH10,LoCLHC14,MishraKZ08,Poess04}.  
As mentioned in the Introduction, 
current approaches such as WatDiv and LDBC perform selectivity
estimation on generated graph instances, which becomes unfeasible when
dealing with massive graphs and query workloads.

In $\mathsf{gMark}$, we address this challenge by generating tailored workloads
directly from the graph schema, where 
query selectivity is set as one of the input parameters.
While BSBM and WatDiv do support a workload-centric approach, they do not provide this fine-grained level of control of query behavior.
In general, we are not aware of any 
solutions for controlling selectivity during query generation relying solely on graph schemas.

Finally, to our knowledge $\mathsf{gMark}$ is the first solution for generating workloads exhibiting recursive path queries.
In particular, the queries generated by $\mathsf{gMark}$ are the so-called {\em unions of conjunctive regular path queries}~\cite{Baeza13}.
 This fundamental
query language covers many graph queries that appear in practice. In particular,
SPARQL~1.1 and openCypher have conjunctive regular path queries as their core
  constructs. They are also expressible in modern
  Datalog-like query languages~\cite{ArefCGKOPVW15} and in SQL:1999. 
  As discussed in the Introduction,
  $\mathsf{gMark}$ supports the output of query workloads in all these concrete query language syntaxes.

\section{The generation problem}\label{sec:generation-problem}
We start this section by intuitively introducing the generation problem via a real-world motivating example that also emphasizes some of the limitations of existing benchmarks (Section~\ref{sec:motivating-example}).
Then, we formalize the benchmark generation problem i.e., generating a graph instance and a query workload on this instance according to
a given set of constraints.  More precisely, we formally define the problems of
{\em graph generation} (Section~\ref{sec:graph-generation}) and {\em query
workload generation} (Section~\ref{sec:query-generation}), and we detail constraint parameters
for both problems.  We conclude the section by showing that the generation problem
is {\em intractable} (Section~\ref{sec:intractability-generation}).

\subsection{Motivating example}\label{sec:motivating-example}
Assume that a user wants to perform an extensive empirical evaluation of a new graph query processing algorithm that she designed.
For this purpose, the user needs to efficiently generate: (i) graphs of different characteristics and sizes (to test the robustness and scalability of her algorithm), and (ii) query workloads sufficiently diverse to highlight strong or weak points of her new development.
Additionally, our user would like to specify all parameters in a declarative way and to be able to simulate real-world scenarios.

For instance, the user would like to generate graphs simulating a bibliographical database that uses a simple schema consisting of 5 node types and 4 edge predicates.
Intuitively, the database consists of {\em researchers} who {\em author} {\em papers} that are {\em published} in {\em conferences} (held in {\em cities}) and that can be {\em extended} to {\em journals}.
Moreover, the user would like to specify constraints on the number of occurrences for both the node types and edge predicates, either as proportions of the total size of the graph or as fixed numbers e.g., as in Fig.~\ref{fig:biblio:constraints:nodes} and~\ref{fig:biblio:constraints:edges}.
For instance, for graphs of arbitrary size, half of the nodes should be authors, but a fixed number of nodes should be cities where conferences are held (in a realistic scenario the number of authors increases over time, whereas the number of cities remains more or less constant).

Moreover, our user wants to specify real-world relationships between types and predicates via schema constraints e.g., as in Fig.~\ref{fig:biblio:constraints:schema}.
For instance, the first line encodes that the number of authors on papers follows a Gaussian distribution (the {\em in-distribution} of the schema constraint), whereas the number of papers authored by a researcher follows a Zipfian (power-law) distribution (the {\em out-distribution} of the schema constraint).
The following lines in Fig.~\ref{fig:biblio:constraints:schema} encode constraints such as: a paper is published in exactly one conference, a paper can be extended or not to a journal, a conference is held in exactly one city, the number of conferences per city follows a Zipfian distribution, etc.

Whereas specifying all aforementioned constraints as an input $\mathsf{gMark}$ {\em graph configuration} (cf.\ Fig.~\ref{fig:gmark_arch}) can be easily done via a few lines of XML, to the best of our knowledge there is no benchmark where they can be specified.
For instance, in SP2Bench~\cite{SchmidtHLP09} (which is also based on a similar bibliographical scenario), all constraints are hardcoded and the only parameter that a user can specify is the size of the graph, which makes it impossible for the user to finely tune schema-related characteristics of the graph.
Moreover, in WatDiv~\cite{AlucHOD14}, although the user can specify similar global constraints on the node types and the out-distributions, the absence of global constraints on the edge predicates and the absence of in-distributions entail important limitations, such as the absence of control on the selectivities of the queries of the generated query workloads.

In $\mathsf{gMark}$ we allow the user to finely tune the selectivities of the generated queries.
For instance, the user can specify that she wants queries that, for any graph size, have constant, linear, or quadratic selectivity (we formally define these selectivity classes later on in the paper).
To the best of our knowledge, no graph database benchmark supports such a feature.
In particular, SP2Bench uses a fixed set of queries, while WatDiv can generate synthetic workloads, but without schema-driven selectivity control.
As another remarkable difference to the state-of-the-art benchmarks, none of them supports recursive queries such as
$(${\tt authors}$\cdot${\tt authors}$^-)^*$ 
which selects all pairs of researchers linked by a co-authorship path
(by $^-$ we denote the predicate inverse and by $^*$ the transitive closure).
As shown in the input $\mathsf{gMark}$ {\em query workload configuration} in Fig.~\ref{fig:gmark_arch}, the user can finely-tune e.g., the structure, size, selectivity of such queries.

\begin{figure}\scriptsize\centering
\subfigure[\label{fig:biblio:constraints:nodes}Node types.]{
\begin{tabular}{|l|l|}\hline
\em Node type & \em Constr. \\\hline\hline
{\tt researcher} & 50\%\\\hline
{\tt paper} & 30\%\\\hline
{\tt journal} & 10\%\\\hline
{\tt conference} & 10\%\\\hline
{\tt city} & 100 (fixed)\\\hline
\end{tabular}
}~~~
\subfigure[\label{fig:biblio:constraints:edges}Edge predicates.]{
\begin{tabular}{|l|l|}\hline
\em Edge predicate & \em Constr. \\\hline\hline
{\tt authors} & 50\%\\\hline
{\tt publishedIn} & 30\%\\\hline
{\tt heldIn} & 10\%\\\hline
{\tt extendedTo} & 10\%\\\hline
\end{tabular}
}

\subfigure[\label{fig:biblio:constraints:schema}In- and out-degree distributions.]{
\begin{tabular}{|l|l|l|}\hline
{\em source\,type} $\underrightarrow{\textit{predicate}}$ {\em target\,type}
& {\em In-distr.} & {\em Out-distr.} \\\hline\hline

{\tt researcher} $\underrightarrow{\texttt{authors}}$ {\tt paper} & Gaussian & Zipfian \\\hline
{\tt paper} $\underrightarrow{\texttt{publishedIn}}$ {\tt conference} & Gaussian & Unif.\ [1,1] \\\hline
{\tt paper} $\underrightarrow{\texttt{extendedTo}}$ {\tt journal} & Gaussian & Unif.\ [0,1] \\\hline
{\tt conference} $\underrightarrow{\texttt{heldIn}}$ {\tt city} & Zipfian & Unif.\ [1,1] \\\hline
\end{tabular}
}
\vspace*{-0.2cm}
\caption{The bibliographical motivating example.}
\vspace*{-0.3cm}
\end{figure}

\subsection{Graph generation}\label{sec:graph-generation}

$\mathsf{gMark}$ generates directed edge-labeled graphs and outputs them in formats that are compatible with the supported query languages.
In this section, we formally define a {\em graph configuration} (cf.\ Fig.~\ref{fig:gmark_arch}), which is essentially a set of constraints that generated graph instances should satisfy.  We start by
giving a definition of the {\em schema} constraints which are the backbone of graph
configurations. 

\begin{definition}\label{def:schema}

A \emph{graph schema} is a tuple $\mathcal{S} = (\Sigma,\Theta,\mathcal{T},\eta)$ where $\Sigma$ is a
finite alphabet of {\em predicates}, $\Theta$ is a finite set of {\em types} such
that each node of the generated graph is associated with exactly one type, $\mathcal{T}$
is a set of constraints on $\Sigma$ and $\Theta$ associating to each
predicate and type either a proportion of its occurrences or a fixed constant
value, and $\eta$ is a partial function associating to a triple consisting of
a pair of input and output types $T_1, T_2$ in $\Theta$ and a symbol $a$ in $\Sigma$, a pair
($D_{in},D_{out}$) of in- and out-degree distributions. 
\end{definition}
Predicates correspond to {\em edge labels}, and in the remainder we use the two terms interchangeably. 
A degree distribution is a {\em probability distribution}, among which $\mathsf{gMark}$ supports {\em uniform}, {\em Gaussian} (also known as normal), and {\em Zipfian} distributions.
For each distribution, the user can specify the relevant parameters (i.e., min and max for uniform, $\mu$ and $\sigma$ for Gaussian, and $s$ for Zipfian).
If the user wants to specify only the in- or the out-distribution, she can mark the other one as {\em nonspecified}.
Notice that the parameters for the in- and out-degree distributions of each
triple $T_1, T_2, a$ have to be consistent in order to guarantee the
compatibility of the number of generated ingoing and outgoing edges. We
discuss the details of this consistency check in Section~\ref{sec:graphgen}. 

\begin{definition} A {\em graph configuration} is a tuple $\mathbb{G} = (n, \mathcal{S})$, where $n$ is the number of nodes of the graph and $\mathcal{S}$ is the schema of the graph. 
\end{definition}

\begin{example}\label{ex:schema:graph}\normalfont
Take a graph configuration $\mathbb{G} = (n, \mathcal{S})$ s.\,t.:

$\bullet$ The graph should have $n=5$ nodes.

$\bullet$ The graph should satisfy the schema $\mathcal{S}=(\Sigma,\Theta,\mathcal{T},\eta)$, where $\Sigma =\{a,b\}$, $\Theta=\{T_1,T_2,T_3\}$, $\mathcal{T}$ is defined as $\mathcal{T}(T_1) = 60\%$, 
$\mathcal{T}(T_2) = 20\%$ and $\mathcal{T}(T_3) = 1$ and $\eta$ is defined as follows (we report only some constraints):
\begin{gather*}
\eta(T_1,T_1,a)=(g,z),~~~~\eta(T_1,T_2,b)=(u,g),\\
\eta(T_2,T_2,b)=(g,\mathit{ns}),~~~~\eta(T_2,T_3,b)=(\mathit{ns},u),
\end{gather*}
where by $u,g,z,$ and $\mathit{ns}$ we denote uniform, Gaussian, Zipfian, and non-specified distributions, respectively.
\begin{figure}[h]
\centering
\begin{tikzpicture}[yscale=0.4]
	\node[state] (v1) at (0, 0) {$T_1$};
	\node[state] (v2) at (2, 2) {$T_1$};
	\node[state] (v3) at (2, -2) {$T_1$};
	\node[state] (v4) at (4, 2) {$T_2$};
	\node[state] (v5) at (4, -2) {$T_3$};

	\path[->] (v1) edge [above,bend right=25] node [align=center]  {$a$} (v2)
	(v2) edge [above,bend right=25] node [align=center]  {$a$} (v1)
	(v1) edge [above] node [align=center]  {$a$} (v3)
	(v2) edge [left] node [align=center]  {$a$} (v3)
	(v3) edge [above] node [align=center]  {$a$} (v5)
	(v2) edge [above] node [align=center]  {$b$} (v4)
	(v2) edge [left] node [align=center]  {$a$} (v5)
	(v4) edge [left] node [align=center]  {$b$} (v5)
	(v2) edge [loop,above] node [align=center]  {$a$} (v2)
	(v4) edge [loop,above] node [align=center]  {$b$} (v4)
;
\end{tikzpicture}
\caption{\label{fig:graph:schema}Graph from Example~\ref{ex:schema:graph}.}
\end{figure}
For instance, the graph in Fig.~\ref{fig:graph:schema} can be generated by using this graph configuration.
Although a much larger graph is needed to observe the actual distributions, we refer to Section~\ref{sec:query-generation-selectivity} for further examples also handling distributions.\hfill\ensuremath{\square}
\end{example}

\subsection{Query workload generation}\label{sec:query-generation}

We next formally define {\em query workload configurations} (cf.\ Fig.~\ref{fig:gmark_arch}).  Towards this, we first
outline the query language supported by $\mathsf{gMark}$.  As motivated in Section
\ref{sec:relatedwork}, we focus on generating {\em unions of conjunctions of regular path queries} ($\mathsf{UCRPQ}$).  This fundamental query language covers many queries which
appear in practice, including the core constructs of SPARQL~1.1 queries,
Neo4j's Cypher queries, and many Datalog-based encodings~\cite{Baeza13}.

Recall that $\Sigma$ is a finite alphabet (cf.\ Definition \ref{def:schema}) and let $\Sigma^+ = \{a, a^-\mid a\in\Sigma\}$, where $a^-$ denotes the {\em inverse} of the edge label $a$.
Let $V = \{?x, ?y, \ldots\}$ be a set of variables and $n>0$. 
A {\em query rule} is an expression of the form
$$(\overline{?v}) \gets (?x_1, r_1, ?y_1), \ldots, (?x_n, r_n, ?y_n)$$ where:
for each $1\leq i\leq n$, it is the case that $?x_i, ?y_i \in V$;
$\overline{?v}$ is a vector of zero or more of these variables, the length of
which is called the {\em arity} of the rule; and, for each $1\leq i\leq n$, it
is the case that $r_i$ is a regular expression over $\Sigma^+$ using $\{\cdot,
+,*\}$ (i.e., {\em concatenation}, {\em disjunction}, and {\em Kleene star}).
Without loss of generality, we restrict regular expressions to only use
recursion (i.e., the Kleene star symbol $*$) at the outermost level.  
Hence, expressions can always be
written to take either the form $(P_1 + \cdots + P_k)$ or the form $(P_1 +
\cdots + P_k)^*$, for some $k>0$, where each $P_i$ is a {\em path expression}
i.e., a concatenation of zero or more symbols in $\Sigma^+$.  We refer to
the right-hand side of a query rule as the {\em body} of the query rule, each
subgoal $(?x_i, r_i, ?y_i)$ of the body as a {\em conjunct}, and to the
left-hand side as the {\em head} of the query rule.

A {\em query} $Q\in\mathsf{UCRPQ}$ is a finite non-empty set of query rules, each of the same arity. 
The semantics $Q(G)$ of evaluating $Q$ on a given graph $G$ (having
edge labels in $\Sigma$) is the standard one following that of unions of
conjunctive Datalog queries~\cite{Baeza13}, assuming standard {\em set-oriented} semantics.
In summary, a query is basically a collection of {\em query rules}, each rule
having several {\em conjuncts}, each conjunct having several {\em disjuncts}
whose paths have a certain {\em length}.

\begin{example}\label{example:query-encoding}\normalfont
Take the following $\mathsf{UCRPQ}$:
\begin{flalign*}
&(?x,?y,?z)\leftarrow (?x, (a\cdot b+ c)^*, ?y), (?y,a,?w), (?w, b^-, ?z)\\
&(?x,?y,?z)\leftarrow (?x, (a\cdot b+ c)^*, ?y), (?y,a,?z)
\end{flalign*}
This query selects nodes $x,y,z$ such that one can navigate between $x$ and $y$ with a path in the language of $(a\cdot b+ c)^*$, and moreover, can navigate between $y$ and $z$ with a path in the language of $a\cdot b^- +a$.
This query consists of two rules consisting of three conjuncts and two conjuncts, resp.
The conjuncts of the form $(?x, (a\cdot b+ c)^*, ?y)$ have two disjuncts (of length 2 and 1, resp.) and all other conjuncts have only one disjunct (of length 1).
\hfill\ensuremath{\square}\end{example}
We define {\em query size} as a tuple $$t = ([r_{\min}, r_{\max}],[c_{\min},
c_{\max}],[d_{\min},d_{\max}],[l_{\min},l_{\max}])$$ providing intervals of
minimal and maximal 
values for the number of rules, conjuncts, disjuncts, and length of the paths in
the query, resp.,
that generated queries should have. For example, the query from
Example~\ref{example:query-encoding} has size $([2,2],[2,3],[1,2],[1,2])$.
In $\mathsf{gMark}$, users can specify minimal and maximal values for all of these parameters; 
in turn, the query generation algorithm can assign
values that range in these intervals. 
For simplicity of presentation, we assume in the remainder that a query consists of only one rule.

\begin{definition}\label{definition:query:workload:configuration}
A {\em query workload configuration} is a tuple
$\mathbb{Q}=(\mathbb{G},\#q,\mathit{ar},f,e,p_{r},t)$ where $\mathbb{G}$ is a graph configuration, $\#q$ is the number of queries in the workload (defined on all instances of $\mathbb{G}$), 
$\mathit{ar}$ is the arity constraint, $f$ is the 
shape constraint, $e$ is the selectivity of
the queries in the workload, $p_{r}$ is the probability of recursion, and
$t$ is the query size.  
\end{definition}
Notice that in addition to the graph configuration $\mathbb{G}$ (cf.\ Section~\ref{sec:graph-generation}), the user can specify several other constraints.
First, $\mathit{ar}$ is the range of allowed arities for the queries in the workload. 
For instance, the query from
Example~\ref{example:query-encoding} has arity 3.  We also support Boolean
queries (arity 0). The shape constraint $f$ contains the supported query shapes
(among which chain, star, cycle, and star-chain are supported in
$\mathsf{gMark}$) and the user can specify which among them she would like to have in the
generated query workload. 
Similarly, the selectivity constraint $e$ contains the desired
selectivity classes, among which we support constant, linear and quadratic (cf.\ Section~\ref{sec:querygen}).  The
user can further specify the probability to have the multiplicity * above a
disjunct, reflected by the parameter $p_{r}$.

We finally point out that $\mathsf{gMark}$ is able to {\em translate} the generated $\mathsf{UCRPQ}$ in four concrete syntaxes (cf.\ Fig.~\ref{fig:gmark_arch}): SPARQL, openCypher, PostgreSQL, and Datalog.

\subsection{Intractability of the generation problem}\label{sec:intractability-generation}
In this section, we prove the {\em intractability} of the problems of graph and query workload generation.
First, we prove that the graph generation problem is intractable.

\begin{theorem}\label{thgenerationnpc}
Given a graph configuration $\mathbb{G}$, deciding whether there exists a graph satisfying $\mathbb{G}$ is NP-complete.
\end{theorem}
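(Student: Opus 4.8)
The plan is to establish NP-completeness in two parts: membership in NP and NP-hardness via a reduction from a known NP-complete problem.

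For \emph{membership in NP}, the natural certificate is a graph instance $G$ on the prescribed $n$ nodes, together with its node-type assignment. Given such a witness, one checks in polynomial time that: each node carries exactly one type from $\Theta$; the number of nodes of each type and the number of edges of each predicate respect the occurrence constraints in $\mathcal{T}$ (interpreting percentages against $n$ and fixed values literally); every edge $(v, a, w)$ is licensed by some triple $(T_1, T_2, a)$ on which $\eta$ is defined, with $v$ of type $T_1$ and $w$ of type $T_2$; and, for each such triple, the in- and out-degrees of the incident nodes fall within the support of the specified distributions $D_{in}, D_{out}$ (with \emph{nonspecified} imposing no constraint). All of these are local, countable conditions, so verification is polynomial and the problem lies in NP. A subtlety here is that the certificate size is polynomial in the \emph{value} of $n$, which is fine if $n$ is given in unary in $\mathbb{G}$; if $n$ were binary this would need more care, but I will assume the former, consistent with the benchmarking setting.

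For \emph{NP-hardness}, the obstacle that makes satisfiability nontrivial is the simultaneous enforcement of global occurrence counts (how many nodes of each type, how many edges of each predicate) together with the per-triple degree-bound constraints coming from $\eta$. I would reduce from a problem that naturally encodes ``distribute a fixed budget of edges subject to capacity constraints,'' such as \textsc{3-Partition}, \textsc{Subset-Sum}, or \textsc{Bin-Packing}; \textsc{3-Partition} is attractive because it remains NP-complete in the strong sense, which sidesteps the unary/binary encoding issue for the numeric parameters. Given a \textsc{3-Partition} instance with integers $s_1, \ldots, s_{3m}$ summing to $mB$, the idea is to build a schema in which $m$ ``bin'' types must each receive exactly $B$ units of in-degree along a designated predicate, while $3m$ ``item'' gadgets each contribute exactly $s_j$ units and must route all of their contribution to a single bin (enforced by making the relevant out-degree distribution a point mass, i.e.\ $\mathrm{Unif}[s_j, s_j]$, combined with a type structure that forbids splitting). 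The occurrence constraints in $\mathcal{T}$ fix the node counts so that the only freedom left is the item-to-bin assignment, and a satisfying graph exists iff the partition exists.

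The main obstacle I anticipate is the gadget engineering: I must ensure that the degree-distribution mechanism (uniform/Gaussian/Zipfian with parameters, plus the consistency requirement between in- and out-distributions mentioned after Definition~\ref{def:schema}) is \emph{expressive enough} to force exact counts yet \emph{not so expressive} that it trivially over- or under-constrains the instance. Using point-mass uniform distributions $\mathrm{Unif}[k,k]$ to pin degrees to exact values, and choosing $n$ and the percentages in $\mathcal{T}$ so they are jointly consistent (all percentages yielding integers, fixed values summing correctly), should do the job, but verifying that the construction is polynomial in the input size and that no unintended graph satisfies $\mathbb{G}$ is where the real work lies. Once the reduction is shown correct in both directions, combining it with membership in NP yields NP-completeness.
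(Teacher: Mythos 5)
Your NP membership argument is essentially the paper's (guess a graph on the prescribed $n$ nodes and verify all constraints in polynomial time), and your remark that $n$ must effectively be treated as unary so that the certificate is polynomial in the input is a fair point the paper glosses over. The gap is in the hardness direction. The paper reduces from $\mathrm{SAT}_{1\text{-}\mathrm{in}\text{-}3}$ using only three degree macros --- out-degree exactly $1$, out-degree in $\{0,1\}$, and out-degree $0$ --- together with fixed node counts; the counting argument (exactly one node of each clause type $C_l$ and each witness type $B_i$, with the total budget $2n+k+1$ leaving room for exactly one of $T_i,F_i$ per variable) does all the work, and no numeric degree values appear anywhere. Your \textsc{3-Partition} route instead leans on point-mass distributions $\mathrm{Unif}[s_j,s_j]$ to encode item sizes, and its correctness hinges on exactly the clause you leave unresolved: ``combined with a type structure that forbids splitting.'' That is where the construction breaks down as stated. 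In this formalism $\eta$ constrains degrees per triple (source type, target type, predicate). If there is a single bin type with $m$ nodes, an out-degree of exactly $s_j$ toward that type says nothing about how those $s_j$ edges distribute over the $m$ bin nodes, and the resulting question (realize a bipartite graph with prescribed out-degrees $s_j$ and in-degrees $B$) is a polynomial-time degree-sequence realization problem, not \textsc{3-Partition}. If instead you introduce $m$ distinct bin types, the per-triple constraint $\eta(I_j,\mathit{Bin}_i,a)=\mathrm{Unif}[s_j,s_j]$ forces $s_j$ edges to \emph{every} bin type, which over-constrains the instance. So the reduction needs a genuinely new gadget to make an item commit atomically to one bin, and it is not clear the schema language supports one; the paper sidesteps the issue entirely by choosing a source problem whose structure is captured by $0/1$ choices and global node counts rather than by numeric capacities. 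Until that gadget is supplied, the hardness half of your proof is not established.
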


Prior to presenting the proof of Theorem~\ref{thgenerationnpc}, we would like to introduce some standard macros for encoding pairs of in- and out-distributions.
Precisely, we use:

$\bullet$ ``1'' for non-specified in-degree distribution and uniform out-degree distribution with min=max=1. 
In other words, $\eta(T_1,T_2,a)= 1$ means that from a node of type $T_1$ there is precisely one outgoing $a$-labeled edge to a node of type $T_2$, and that in a node of type $T_2$ we can have an arbitrary number of incoming $a$-labeled edges from nodes of type $T_1$.

$\bullet$ ``?'' for non-specified in-degree distribution and uniform out-degree distribution with min=0 and max=1.

$\bullet$ ``0'' for non-specified in-degree distribution and uniform out-degree distribution with min=max=0.

\begin{proof}\normalfont
We show the NP-hardness by reduction from the $\mathrm{SAT}_{1\mbox{-}\textrm{in}\mbox{-}3}$ {problem}, known to be NP-complete~\cite{Schaefer78}.
Take a 3CNF formula $\varphi=C_1\land\ldots\land C_k$ over variables $x_1,\ldots,x_n$.
We construct a schema $\mathcal{S}_\varphi = (\Sigma_\varphi, \Theta_\varphi, \mathcal{T}_\varphi, \eta_\varphi)$ and a subsequent graph configuration $\mathbb{G}_{\varphi} = (n_\varphi, \mathcal{S}_\varphi)$ as follows:

$\bullet$ $n_\varphi$: The graph should have $2\times n+ k + 1$ nodes.

$\bullet$ $\Sigma_{\varphi}$: There should be $3\times n+ k $ symbols (predicates) in the alphabet: $\Sigma_\varphi=\{c_1,\ldots,c_k,b_1,\ldots,b_n,t_1,f_1,\ldots,t_n,f_n\}$

$\bullet$ $\Theta_{\varphi}$: There should be $3\times n+ k + 1$ node types: $\Theta_\varphi=\{A,C_1,\ldots,C_k,B_1,\ldots,B_n,T_1,F_1,\ldots,T_n,F_n\}$.

$\bullet$ $\mathcal{T}_\varphi$: There should be precisely one node of type $A$ in the graph, which can be expressed as $\mathcal{T}_\varphi(A) = 1$.
Additionally, $\mathcal{T}_\varphi(B_1) = \ldots = \mathcal{T}_\varphi(B_n) = \mathcal{T}_\varphi(C_1) = \ldots = \mathcal{T}_\varphi(C_k) = 1$.
Notice that all these constraints could be alternatively expressed by saying that the proportion of each of them should be $1/(2\times n + k + 1)$ of the graph nodes.

$\bullet$ $\eta_{\varphi}$ such that:

\hspace*{0.35cm}$\bullet$ $\eta_{\varphi}(A, T_1, t_1) = \ldots = \eta_{\varphi}(A, T_n, t_n) = ?$

\hspace*{0.35cm}$\bullet$ $\eta_{\varphi}(A, F_1, f_1) = \ldots = \eta_{\varphi}(A, F_n, f_n) = ?$

\hspace*{0.35cm}$\bullet$ $\eta_{\varphi}(T_i, C_{l_1}, c_{l_1})\!=\!\ldots\!=\!\eta_{\varphi}(T_i, C_{l_m}, c_{l_m})\!=\!\eta_{\varphi}(T_i, B_{i}, b_i) = 1$, for every $i\in\{1,\ldots,n\}$, where $c_{l_1},\ldots, c_{l_m}$ correspond to the clauses in which the variable $x_i$ appears in a positive literal.

\hspace*{0.35cm}$\bullet$ $\eta_{\varphi}(F_i, C_{l_1}, c_{l_1})\! =\!\ldots\!=\!\eta_{\varphi}(F_i, C_{l_m}, c_{l_m})\!=\!\eta_{\varphi}(F_i, B_{i}, b_i) = 1$, for every $i\in\{1,\ldots,n\}$, where $c_{l_1},\ldots, c_{l_m}$ correspond to the clauses in which the variable $x_i$ appears in a negative literal.

\hspace*{0.35cm}$\bullet$ $\eta_{\varphi}(X,Y,b)=0$ for all other combinations of types $X, Y$ and symbols $b$ not present in one of the aforementioned cases.

We illustrate the construction for $\varphi_0 = (x_1\lor\neg x_2\lor x_3)\land (\neg x_1\lor x_3\lor \neg x_4)$, for which we obtain (we omit all cases with 0):

$\bullet$ $\eta_{\varphi}(A,T_1,t_1)=\eta_{\varphi}(A, F_1, f_1)=\ldots= \eta_{\varphi}(A,T_4,t_4)=$ 

$= \eta_{\varphi}(A, F_4, f_4) = ?$

$\bullet$ $\eta_{\varphi}(T_1,C_1,c_1) = \eta_{\varphi} (T_1,B_1,b_1) =\eta_{\varphi}(F_1,C_2,c_2)=$ 

$= \eta_{\varphi} (F_1,B_1,b_1) = \eta_{\varphi}(T_2, B_2, b_2) = \eta_{\varphi}(F_2, C_1,c_1)=$ 

$= \eta_{\varphi} (F_2, B_2, b_2) = \eta_{\varphi}(T_3, C_1, c_1) = \eta_{\varphi}(T_3, C_2, c_2)=$ 

$=\eta_{\varphi}(T_3, B_3, b_3) = \eta_{\varphi}(F_3, B_3, b_3) = \eta_{\varphi}(T_4,B_4,b_4) =$ 

$= \eta_{\varphi} (F_4, C_2,c_2) = \eta _{\varphi}(F_4, B_4,b_4) = 1$

\noindent We claim that $\varphi\in\mathrm{SAT}_{1\mbox{-}\textrm{in}\mbox{-}3}$ iff there exists a graph that satisfies the graph configuration.

\begin{figure}[htb]\centering
\begin{tikzpicture}[xscale=1.7,yscale=1.2]
\node[state] (a) at (0,0) {$A$};
\node[state] (t1) at (0, 1) {$T_1$};
\node[state] (t2) at (1, 0) {$T_2$};
\node[state] (f3) at (0, -1) {$F_3$};
\node[state] (f4) at (-1, 0) {$F_4$};
\node[state] (c1) at (-1, 1) {$C_1$};
\node[state] (b1) at (1, 1) {$B_1$};
\node[state] (b2) at (2, 0) {$B_2$};
\node[state] (b3) at (1, -1) {$B_3$};
\node[state] (c2) at (-2, 0) {$C_2$};
\node[state] (b4) at (-1, -1) {$B_4$};
\path[->] 
(a) edge [right] node [align=center]  {$t_1$} (t1)
(a) edge [above] node [align=center]  {$t_2$} (t2)
(a) edge [right] node [align=center]  {$f_3$} (f3)
(a) edge [above] node [align=center]  {$f_4$} (f4)
(t1) edge [above] node [align=center]  {$c_1$} (c1)
(t1) edge [above] node [align=center]  {$b_1$} (b1)
(t2) edge [above] node [align=center]  {$b_2$} (b2)
(f3) edge [above] node [align=center]  {$b_3$} (b3)
(f4) edge [above] node [align=center]  {$c_2$} (c2)
(f4) edge [right] node [align=center]  {$b_4$} (b4)
;
\end{tikzpicture}
\caption{\label{fig:np-c:proof}Graph from the proof of Theorem~\ref{thgenerationnpc}.}
\end{figure}

For the {\em only if} part, take a valuation that satisfies exactly one literal of each clause and construct a graph that encodes this valuation, starting from a node of type $A$.
For example, for the above formula $\varphi_0$ and the valuation such that $x_1$ and $x_2$ are true, and $x_3$ and $x_4$ are false, construct the graph from Fig.~\ref{fig:np-c:proof}.
Since we choose exactly one among $T_i$ and $F_i$ for every $i\in\{1,\ldots, n\}$ we have exactly one $B_i$.
Moreover, since exactly one literal of each clause is satisfied we have exactly one $C_i$ (for $i\in\{1,\ldots, k\}$).
Thus, the constraints in $\mathcal{T}_\varphi$ are satisfied. 
As for the number of nodes, we have $2\times n$ (because there is exactly one valuation of each variable and we also have its corresponding $B_i$) + $k$ (because of the $k$ clauses) + 1 (the node of type $A$).
Consequently, the constructed graph satisfies all the constraints from the configuration.

For the {\em if} part, take a graph satisfying the constraints. 
Since it satisfies the constraints from $\mathcal{T}_\varphi$, the graph should have one node $A$, one $B_i$ (for $i\in\{1,\ldots, n\}$) and one $C_i$ (for $i\in\{1,\ldots,k\})$.
Since the total size of the graph is of $2\times n + k + 1$ nodes and seen how we can reach $B_i$'s and $C_i$'s based on the schema, we infer that the other $n$ nodes correspond to nodes of type $T_i$ or $F_i$ (encoding a valuation of the variable $x_i$).
Since we have precisely one $B_i$, we infer that each variable has exactly one valuation and since we have precisely one $C_i$, we infer that in each clause there is exactly one literal that is satisfied.
This means that the formula $\varphi$ is in $\mathrm{SAT}_{1\mbox{-}\textrm{in}\mbox{-}3}$.

To show the membership of the problem to NP, we point out that a non-deterministic Turing machine has to guess a graph having as many nodes as the constraint from the configuration.
The size of such a graph is thus polynomial in the size of the input and testing whether it satisfies the schema can be easily done in polynomial time.
\hfill\ensuremath{\square}\end{proof}

As a natural consequence of Theorem \ref{thgenerationnpc}, we have that the query workload generation problem is also intractable. 
That means that some parameters of the query workload cannot be fulfilled and it is not possible to test this efficiently. 
Hence, $\mathsf{gMark}$ follows a heuristic strategy in the generation.
More precisely, it attempts to achieve the exact values of the parameters and it may decide to relax some of them in order to obtain linear running time.
However, we do not claim that our algorithm is the best linear time algorithm.
Nonetheless, in the experimental study we observed fast generation time and good quality graph instances and queries.

\begin{corollary}\label{corollary-intractability}
Given a query workload configuration $\mathbb{Q}$, deciding whether there exists a query workload satisfying $\mathbb{Q}$ is NP-complete.
\end{corollary}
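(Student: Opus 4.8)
The plan is to derive NP-completeness directly from Theorem~\ref{thgenerationnpc}, exploiting that a query workload configuration $\mathbb{Q}$ carries a graph configuration $\mathbb{G}$ inside it and that a workload is meaningful only relative to instances of $\mathbb{G}$. For membership in NP, a certificate consists of a graph instance $G$ together with a workload $W$ of $\#q$ queries. Both are polynomially bounded: $G$ has exactly $n$ nodes (as in the proof of Theorem~\ref{thgenerationnpc}), and each query of $W$ has its numbers of rules, conjuncts, disjuncts and path lengths bounded by the intervals in $t$, so $W$ has polynomial size. A nondeterministic machine guesses $G$ and $W$ and verifies in polynomial time that $G$ satisfies $\mathbb{G}$ (a polynomial check, as in the proof of Theorem~\ref{thgenerationnpc}), that every query in $W$ has an arity in $\mathit{ar}$, a shape in $f$, a size consistent with $t$, and a recursion profile consistent with $p_r$ (all syntactic), and that every query in $W$ lies in a selectivity class in $e$, which is exactly the schema-driven, polynomial-time selectivity computation performed by $\mathsf{gMark}$. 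Hence the problem is in NP.

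For NP-hardness I would give a polynomial-time many-one reduction from the graph generation problem of Theorem~\ref{thgenerationnpc}. Map a graph configuration $\mathbb{G}$ to $\mathbb{Q} = (\mathbb{G}, \#q, \mathit{ar}, f, e, p_r, t)$ in which the workload parameters are fixed to a degenerate, trivially satisfiable setting: e.g.\ $\#q = 1$, arity $2$, shape \emph{chain}, $p_r = 0$, and $t$ forcing a single conjunct with a single disjunct of length one, with $e$ the selectivity class into which such a one-edge chain query always falls. This map is computable in linear time. For correctness: if $\mathbb{G}$ admits an instance, then a single length-one chain query over one of its predicates is a workload witnessing $\mathbb{Q}$; conversely, any workload witnessing $\mathbb{Q}$ is a workload on instances of $\mathbb{G}$, so $\mathbb{G}$ must be satisfiable. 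Unfolding this through the reduction of Theorem~\ref{thgenerationnpc} from $\mathrm{SAT}_{1\mbox{-}\textrm{in}\mbox{-}3}$ also exhibits a direct reduction from an NP-complete problem.

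The main obstacle is making the forward direction of the reduction airtight: I must pick the fixed workload parameters so that a conforming workload provably exists exactly when a conforming graph exists, and in particular so that the choice of $e$ — phrased in terms of the behaviour of a query across \emph{all} graph sizes — does not quietly reintroduce instance dependence or an unsatisfiable side condition. In practice this means either checking that the degenerate configurations produced by the Theorem~\ref{thgenerationnpc} reduction always expose at least one usable predicate, or replacing the single chain query by an even more trivial query (a Boolean query, or a conjunct asserting a self-loop) so that the workload-side constraints become vacuous relative to the instance. The remaining verification — that the structural constraints are met and that the selectivity estimate lands in the prescribed class — is routine.
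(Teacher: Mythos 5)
Your proof is correct and follows essentially the same route as the paper: embed the $\mathrm{SAT}_{1\mbox{-}\textrm{in}\mbox{-}3}$ graph-configuration encoding of Theorem~\ref{thgenerationnpc} inside $\mathbb{Q}$, fix the remaining workload parameters to trivially satisfiable values, and for membership guess a polynomial-size witness and verify it in polynomial time. The only difference is one of care: where the paper simply says to take ``arbitrary values'' for the workload parameters, you correctly flag that these must be chosen so that a conforming workload exists whenever a conforming graph does, which is a worthwhile tightening of the same argument rather than a new one.
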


\begin{proof}\normalfont
Recall that a query workload configuration $\mathbb{Q}$ is a tuple $(\mathbb{G},\#q,\mathit{ar},f,e,p_{r},t)$.
Since the graph configuration $\mathbb{G}$ is part of the input of $\mathbb{Q}$, we can take for $\mathbb{G}$ precisely the same encoding of $\mathrm{SAT}_{1\mbox{-}\textrm{in}\mbox{-}3}$ as in the proof of Theorem~\ref{thgenerationnpc} and then arbitrary values for the other constraints in $\mathbb{Q}$.
Then, both {\em if} and {\em only if} parts for the NP-hardness follow precisely as in the proof of Theorem~\ref{thgenerationnpc}.
As for the membership of the problem to NP, we point out that a non-deterministic Turing machine has to guess a query workload and then we can easily decide in polynomial time whether it satisfies the input constraints.
In particular, the query workload is of polynomial size (since the size is given as parameter in the query workload configuration), and the graphs that we need to check whether the query workload satisfies the selectivity constraints are also of polynomial size (since the graph size is given as part of the graph configuration that is also part of the query workload configuration).
\hfill\ensuremath{\square}\end{proof}
Despite the intractability of the generation problems and the $\mathsf{gMark}$ heuristic approach, we would like to already point out that the $\mathsf{gMark}$ graph and query generator leads to highly accurate results, as we detail with our experiments in Section~\ref{subsect:scala}, where we report on the selectivity estimation of the generated queries over the generated graphs.

\section{Graph generation}\label{sec:graphgen}

As shown in Theorem~\ref{thgenerationnpc}, given a graph configuration $\mathbb{G}$, deciding whether there exists a graph satisfying $\mathbb{G}$ is {\em NP-complete}.
Consequently, it is not always possible to generate in PTIME a graph satisfying a given graph configuration $\mathbb{G}$.
Nevertheless, generating graph instances in PTIME is an important goal in designing a graph generator, essentially 
because of {\em scalability}. To achieve scalability we must therefore relax some constraints specified in the graph configuration.

For all such reasons, the {\em $\mathsf{gMark}$ generation algorithm} (cf.\ Fig.~\ref{alg:graph-generation}) takes a heuristic approach that guarantees {\em linear time} generation (i.e. linear in the size of the input and output).
The algorithm pseudocode, which is conceptually quite straightforward, is illustrated below.
For each constraint $\eta(T_1,T_2,a)$, it generates a set of edges as follows.
It creates a vector $v_{\mathit{src}}$ for the source nodes containing each of the nodes of type $T_1$ a number of times drawn according to the out-distribution $D_{\mathit{out}}$. It initializes (line 2), fills (lines 3-4), and, respectively, shuffles (line 7) this vector.
The length $n_{T_1}$ of this vector is computed either as $n\times\mathcal{T}(T_1)$ if $\mathcal{T}(T_1)$ is a proportion, or simply $\mathcal{T}(T_1)$ if $\mathcal{T}(T_1)$ is a fixed value.
The algorithm creates a vector for the target nodes $v_{\mathit{trg}}$ in a similar manner (lines 2, 5--7).
Next, it simultaneously iterates over the two vectors $v_{\mathit{src}}$ and $v_{\mathit{trg}}$ and outputs a number of $a$-labeled edges corresponding to the minimal length of the two vectors (lines 8-9).
The function $\mathit{id}_T(j)$ then returns the $j^{\mathrm{th}}$ node of type $T$ in the graph (assuming that the graph nodes are uniquely identified).
We note that the random draws in Lines 4 and 6 are statistically independent and hence
the order in which we process j has no impact. Likewise, the processing of constraints in Line 1
is statistically independent and hence order-independent.

\begin{figure}
\fbox{
\begin{minipage}{8.35cm}
{\bf Input}: A graph configuration $\mathbb{G} = (n,\mathcal{S} = (\Sigma, \Theta, \mathcal{T}, \eta))$\\
{\bf Output}: A set of edges (source node, label, target node)\\
\begin{tabular}{ll}
{\scriptsize 1:}&\hspace*{-0.25cm}{\bf foreach} $\eta(T_1,T_2,a)=(D_{\mathit{in}}, D_{\mathit{out}})$ \textbf{do}\\
{\scriptsize 2:}&\hspace*{-0.25cm}~~~\textbf{let} $v_{\mathit{src}}$ and $v_{\mathit{trg}}$ be two empty vectors\\
{\scriptsize 3:}&\hspace*{-0.25cm}~~~\textbf{for} $j\in \{1,\ldots, n_{T_1}\}$ \textbf{do}\\
{\scriptsize 4:}&\hspace*{-0.25cm}~~~~~~{\bf add} $\mathit{draw}(D_{\mathit{out}})$ occurrences of $j$ in $v_{\mathit{src}}$\\
{\scriptsize 5:}&\hspace*{-0.25cm}~~~\textbf{for} $j\in \{1,\ldots, n_{T_2}\}$ \textbf{do}\\
{\scriptsize 6:}&\hspace*{-0.25cm}~~~~~~{\bf add} $\mathit{draw}(D_{\mathit{in}})$ occurrences of $j$ in $v_{\mathit{trg}}$\\
{\scriptsize 7:}&\hspace*{-0.25cm}~~~{\em shuffle}$(v_{\mathit{src}})$; {\em shuffle}$(v_{\mathit{trg}})$\\
{\scriptsize 8:}&\hspace*{-0.25cm}~~~\textbf{for} $i\in\{1,\ldots,\min(|v_{\mathit{src}}|,|v_{\mathit{trg}}|)\}$ \textbf{do}\\
{\scriptsize 9:}&\hspace*{-0.25cm}~~~~~~{\bf output} $(\mathit{id}_{T_1}(v_{\mathit{src}}[i]), a, \mathit{id}_{T_2}(v_{\mathit{trg}}[i]))$
\end{tabular}
\end{minipage}}
\caption{\label{alg:graph-generation}Graph generation algorithm.}
\vspace*{-0.2cm}
\end{figure}

Algorithm from Fig.~\ref{alg:graph-generation} produces an output graph {\em in linear time}, without backtracking.
Our implementation leverages further optimizations, notably exploiting the average information of the Gaussian distributions to avoid entirely constructing the vectors when the in- and/or out-distributions are Gaussian.
Our experiments in Section~\ref{subsect:scala} confirm the practical efficiency and scalability of our graph generator.

Due to the heuristic nature of our algorithm, some of the constraints in the graph configuration might not be fulfilled in the output graph.
In particular, for each constraint it generates a number of edges equal to the minimal length of the two vectors (line~8).
Consequently, whenever the two vectors have different sizes, then the generated graph may contain nodes that do not satisfy the precise values dictated by the in- or out-distributions.
Notice that these distributions are essential in the proof of Theorem~\ref{thgenerationnpc} for showing the intractability of the generation problem.

Due to practical reasons, we decided that our generator should always return a graph to the user instead of performing a potentially costly satisfiability check and possibly aborting the generation.
Moreover, this choice still allows to globally preserve the types of distribution for each constraint $\eta(T_1,T_2,a)$, even though the generated number of edges may not satisfy the exact parameters of the in- or out-distributions.
In fact, our method relies on the types of distributions (uniform, Gaussian, Zipfian) and not on the actual parameters of the distributions. Such a choice of preserving the global distributions in the graph generator also turns to 
be useful in the query generator, where we adopt a novel technique of selectivity estimation for the generated queries.
As shown in the experiments in Section~\ref{subsect:scala}, the obtained selectivities exhibit high accuracy, thus confirming the 
effectiveness of our method.

\vspace*{-0.2cm}
\section{Query Generation}\label{sec:query-generation-selectivity}

A core innovation of $\mathsf{gMark}$ is that graph instances and query workloads
are both generated from graph schemas. This allows the tight coupling of
queries to instances while still also supporting fine-grained control of the
diversity of query workloads.  As discussed in Section~\ref{sec:query-generation}, query generation in $\mathsf{gMark}$ is guided by the same
input schema used for graph generation, which makes the queries of the workload
pertinent to graph instances. 

\begin{figure}\fbox{
\begin{minipage}{8.35cm}
{\bf Input}: A query workload configuration $\mathbb{Q}=$ \\
\hspace*{1.1cm}$(\mathbb{G}=(n,\mathcal{S}),\#q,\mathit{ar},f,e,p_r,t)$\\
{\bf Output}: A set of queries\\
\begin{tabular}{ll}
{\scriptsize 1:}&\hspace*{-0.25cm}\textbf{for} $i\in\{1,\ldots,\#q\}$ \textbf{do}\\
{\scriptsize 2:}&\hspace*{-0.25cm}~~~\textbf{let} $\mathit{skeleton=get\_query\_skeleton}(f,t)$\\
{\scriptsize 3:}&\hspace*{-0.25cm}~~~$\mathit{add\_projection\_variables}(\mathit{skeleton}, \mathit{ar})$\\
{\scriptsize 4:}&\hspace*{-0.25cm}~~~\textbf{let} $Q=\mathit{instantiate\_placeholders}(\mathit{skeleton}, \mathcal{S}, p_{r}, t)$\\
{\scriptsize 5:}&\hspace*{-0.25cm}~~~{\bf output} $Q$
\end{tabular}
\end{minipage}}
\caption{\label{alg:query-general}Query generation algorithm.}
\vspace*{-0.2cm}
\end{figure}

In Section~\ref{sec:general:algorithm}, we present our {\em query generation algorithm}, which ensures query diversity in terms of {\em arity}, {\em shape} (chain, star, cycle, star-chain), {\em recursion} (probability to have Kleene star above conjuncts), and {\em size} (number of conjuncts, disjuncts, length of paths).
We have already implemented in $\mathsf{gMark}$ this algorithm supporting the diversity of all the aforementioned parameters.

Additionally, users can specify constraints on the query {\em selectivity}.
In $\mathsf{gMark}$, we guarantee the selectivity estimation for binary queries, a natural and broad class in the context of graph queries, strictly containing the regular path queries (i.e., property paths in SPARQL~1.1). Selectivity estimation for such queries already involved the development of novel sophisticated machinery.
Indeed, in Section~\ref{sec:querygen} we formalize the notion of {\em selectivity classes for binary queries} and we explain the additional data structures and techniques needed to accommodate selectivity estimation for binary queries in the general algorithm.
The schema-driven techniques that we introduce for selectivity estimation are non-trivial and, to the best of our knowledge, novel.
Moreover, binary queries already make interesting practical cases for graph database benchmarking.
In particular, our study based on such queries points out important limitations of existing graph database engines, as we highlight in Section~\ref{sec:databasesStudy}.

\subsection{General algorithm}\label{sec:general:algorithm}

The {\em query generation algorithm} (cf.\ Fig.~\ref{alg:query-general}) takes as input a {\em query workload configuration} $\mathbb{Q}$ (cf.\ Section~\ref{sec:query-generation}) and outputs a set of queries.

For each query, we first generate a query skeleton for the body of the query (line~2) by considering the query shape and the number of conjuncts.
A query skeleton is a set of conjuncts of the form $(?x_1, P, ?x_2)$ where $?x_1$ and $?x_2$ are variables, and $P$ is a placeholder that we instantiate afterwards with a regular expression. 
For example, if we want a chain query and a number of conjuncts $c_{\min}\!=\!c_{\max}\!=\!3$, we obtain a query skeleton $(?x_1,P_1,?x_2), (?x_2,P_2,?x_3), (?x_3,P_3,?x_4)$. 

The other query shapes are essentially combinations of different {\em chain queries}.
More precisely, we implemented {\em cycle queries} as two chain queries that share the same endpoint variables, {\em star queries} as combinations of chains that have
the same starting variable, and {\em star-chain queries} as combinations of chain queries and star queries. 

Next, we randomly pick a set of projection variables such that their number is consistent with the arity constraint (line~3); in $\mathsf{gMark}$, we already support the generation of queries of arbitrary arity (including Boolean i.e., arity~0).
We finally instantiate the placeholders (line~4) with regular  expressions that satisfy the probability of recursion and the other aspects of the size (number of disjuncts and length of paths), and we output the obtained query (line~5).

Algorithm from Fig.~\ref{alg:query-general} works in linear time in the size of the input and output.
It produces a query workload coupled to the graph schema, and satisfying the arity, shape, recursion and size constraints.
As presented here, it disregards the selectivity constraint $e$ that is the trickier parameter of a graph configuration and treated in the next section.

\subsection{Selectivity estimation algorithm}\label{sec:querygen}

In this section, we present an innovative method to estimate the selectivity of binary queries, that we already implemented in $\mathsf{gMark}$.
Our method impacts the step from line~4 in Fig.~\ref{alg:query-general}, and requires new developments.
In Section~\ref{sec:trihotomy}, we classify the binary queries in three classes depending on the asymptotic behavior of the size of their result.
In Section~\ref{sec:estimating}, we propose a method to estimate the selectivity.
In Section~\ref{sec:data:structures}, we introduce the data structures needed to implement it.
In Section~\ref{sec:alg:revisited}, we revisit the algorithm from Fig.~\ref{alg:query-general} to accommodate selectivity estimation.

\subsubsection{Constant, linear, and quadratic queries}\label{sec:trihotomy}
The {\em selectivity} of a query $Q$ on a graph $G$ is the number
of results returned by the evaluation of $Q$ on $G$,  i.e., $|Q(G)|$. 
This number depends on both the topology and the actual size of the graph instance, given that instances of different sizes can be generated upon the same input schema. 

Given a binary query $Q$ and a schema $\mathcal{S}$, for every graph $G$ satisfying $\mathcal{S}$, we assume that the value $|Q(G)|$ behaves asymptotically as a function of the form $|Q(G)| = \beta |G|^\alpha$, where $\alpha$ and $\beta$ are real constants\footnote{Our experimental study duly confirms this assumption.}.
We say that the above value $\alpha$ is the {\em selectivity value of $Q$} ({\em w.r.t.\ }$\mathcal{S}$), denoted $\alpha_\mathcal{S}(Q)$ or simply $\alpha(Q)$ when it does not lead to ambiguity.
Thus, the selectivity value of a query is by definition bounded by the query arity.
Since we focus on binary queries, we consider selectivity values such that $0 \leq \alpha(Q)\leq 2$. 

We identify three practical query classes, depending on whether $\alpha(Q)$ is closer to $0$, $1$, or $2$:

$\bullet$ {\em Constant queries} (for which $\alpha(Q)\approx 0$) select a number of results that does not grow (or barely grows) with the graph size.
For instance, a query selecting pairs ({\tt country}, {\tt language}) is constant if the graphs follow a realistic schema specifying that the numbers of countries and languages do not grow with the graph size, and hence the number of query results is more or less constant.

$\bullet$ {\em Linear queries} (for which $\alpha(Q)\approx 1$) select a number of results that grows at a rate close to the growth of the number of nodes in the graph instances.
For example, a query selecting pairs ({\tt language}, {\tt user}) is linear if the schema specifies that the number of users grows with the graph, whereas the number of languages is more or less constant.
Another example of a linear query is ({\tt user}, {\tt address}) if we assume that the schema specifies that each user has precisely one address and the number of users grows linearly with the graph.

$\bullet$ {\em Quadratic queries} (for which $\alpha(Q)\approx 2$) select a number of results that grows at a rate close to the growth of the square of the number of nodes in the graph instances.
For example, the transitive closure of the {\tt knows} predicate in a social network is quadratic because a realistic schema should specify that this predicate follows a power-law (e.g., Zipfian) in- and out-distribution.
Thus, the query result contains Cartesian products of subsets of users that know and are known by some hub users of the social network.

\subsubsection{Estimating the selectivity value}
\label{sec:estimating}
We propose a solution for estimating the selectivity value $\alpha(Q)$ of a given query $Q$, for all graphs satisfying a given schema $\mathcal{S}$.
This basically means to compute a function that associates to $Q$ a value $\hat{\alpha}(Q) \in \{ 0, 1, 2\}$. 
This value can be made more precise as follows: for a pair of node
types $A$ and $B$,  $\hat{\alpha}_{A,B}(Q)$ is the estimated selectivity of $Q$
restricted to pairs $(x,y)$ where $x$ is of type $A$ and $y$ of type $B$. 
Then, the overall estimated selectivity value of $Q$ is 
$\hat{\alpha}(Q) =
\displaystyle \max_{A,B} (\hat{\alpha}_{A,B}(Q))$. 

To compute these values, we define an algebra based on what we call {\em
selectivity classes}. First, for each node type $A$ within the input schema
$\mathcal{S}$, we denote $\mathsf{Type}(A) = N$ if $A$ grows with the graph size and $\mathsf{Type}(A) = 1$ if it does not. In the graph schema, $\mathsf{Type}(A) = 1$ if $\mathcal{T}(A)$ is a fixed value and $\mathsf{Type}(A) = N$ if $\mathcal{T}(A)$ is a proportional value.

For each query $Q$ and each pair of node types $A$ and $B$, the {\em selectivity
class of $Q$ for} $A$, $B$, denoted $\mathit{sel}_{A,B}(Q)$ is a {\em triple} $(t_A, o, t_B)$ such
that $t_A = \mathsf{Type}(A)$, $t_B = \mathsf{Type}(B)$ and $o \in \{ =, <, >, \diamond, \times
\}$ is an operation between types.

\begin{table}\scriptsize
\caption{\label{tbl:symbols}Algebraic operations between types.}
\vspace*{-0.3cm}
\begin{tabular}{|c|c|c|c|}\hline
{\em \hspace*{-0.15cm}Operation\hspace*{-0.1cm}} & $|\{ n \mid (n_1, n) \in Q(G)\}|$ & $|\{n \mid (n, n_2) \in Q(G)\}|$ & $\alpha(Q)$\\\hline
$=$ & Bounded & Bounded & 0 or 1 \\\hline
$<$ & Bounded & Not bounded & 1\\\hline
$>$ & Not bounded & Bounded & 1 \\\hline
$\diamond$ & Not bounded & Not bounded & 1\\\hline
$\times$ & Not bounded & Not bounded & 2\\\hline
\end{tabular}
\end{table}

We summarize these algebraic operations in Table~\ref{tbl:symbols}, which should be read a follows: 
an operation from the first column denotes that for every graph $G$ satisfying a schema $\mathcal{S}$, for every pair of nodes $(n_1, n_2) \in Q(G)$, it is the case that $|\{ n \mid (n_1, n)  \in Q(G)\}|$ and $|\{n \mid (n, n_2) \in Q(G)\}|$, resp., are or are not bounded (by some constants), as indicated in the second and third columns, resp.
The last column $\alpha(Q)$ is particularly useful to distinguish between the last two operations $\diamond$ and $\times$.

We next intuitively explain the above operations and we illustrate them via examples:

$\bullet$ $=$ is the simplest operation and occurs either (i) between
constant types e.g., ({\tt country}, {\tt language}) as illustrated for
constant queries, or (ii) for some linear queries such as the query defined
by the empty regular expression $\varepsilon$ that returns precisely as
many results as the nodes in the graph.

$\bullet$ $<$ characterizes queries where either (i) the out-degree distribution is Zipfian, such as ({\tt language}, {\tt user}) as illustrated for linear queries, or (ii) the source node type $A$ has $\mathsf{Type}(A)=1$ and the target node type $B$ has $\mathsf{Type}(B)=N$.
Then, the definition of $>$ is symmetric to $<$. 

$\bullet$ $\times$ corresponds to queries performing a Cartesian product between two node sets (both growing with the graph), for example the transitive closure of the {\tt knows} predicate that we used above to illustrate the quadratic queries.
Intuitively, the $\times$ is the result of a $>$ followed by a $<$.

$\bullet$ $\diamond$ is the trickier operation and corresponds for instance to pairs of users that are known by someone in common.
Most users are not linked in this way, but two pairs of hub users are.
Thus, although there are numerous paths between two hubs, their number remains relatively small, in particular it grows linearly with the graph.
Intuitively, the $\diamond$ is the result of a $<$ followed by a $>$.

As defined in Section~\ref{sec:query-generation}, the $\mathsf{gMark}$ workloads consist of queries expressed as $\mathsf{UCRPQ}$'s. Hence, we need to compute the selectivity values for regular path queries, which involve regular expressions. 
First, for a 
query $Q$ defined by the regular expression $\varepsilon$ (the empty word), for each type $A$, we have that $\mathit{sel}_{A,A}(Q) = (\mathsf{Type}(A), =, \mathsf{Type}(A))$.
When $Q$ is defined by a single edge label $a\in\Sigma^+$, we obtain $\mathit{sel}_{A,B}(Q)$ directly from the distribution of the $a$-labeled edges from $A$ to $B$, as defined above and obtained from the schema.

\begin{example}
\normalfont\label{example:types}
    Consider the schema given in Example~\ref{ex:schema:graph}. First, we assign selectivity classes to the types of the schema, thus $\mathsf{Type}(T_1) = \mathsf{Type}(T_2) = N$ whereas $\mathsf{Type}(T_3) = 1$. From the schema, we compute the following values:

$\bullet$ $\mathit{sel}_{T_1,T_1}(a) = (N,<,N)$ and $\mathit{sel}_{T_1,T_1}(a^-) = (N,>$, $N)$ (because of the Zipfian out-distribution that moreover implies a Zipfian in-distribution for the inverse, and $\mathsf{Type}(T_1)=N$), 

$\bullet$ $\mathit{sel}_{T_1,T_2}(b) = (N,=,N)$ and $\mathit{sel}_{T_2,T_1}(b^-) = (N,=,N)$ (because of non-Zipfian in- and out-distributions, and moreover, both $\mathsf{Type}(T_1)=\mathsf{Type}(T_2)= N$), 

$\bullet$ $\mathit{sel}_{T_2,T_2}(b) = (N,=,N)$ and $\mathit{sel}_{T_2,T_2}(b^-) = (N,=,N)$ (same reasoning as for the previous bullet),

$\bullet$ $\mathit{sel}_{T_2,T_3}(b) = (N,>,1)$ and $\mathit{sel}_{T_3,T_2}(b^-) = (1,<,N)$ (because of non-Zipfian in- and out-distributions, and moreover, $\mathsf{Type}(T_2)=N$ and $\mathsf{Type}(T_3)=1$).
\hfill\ensuremath{\square}\end{example}

Let a query $Q$ be defined by the regular expression $p_1 + p_2$ where $p_1$
and $p_2$ are two regular expressions that define queries $Q_1$ and
$Q_2$, respectively. For every pair of node types $A$, $B$, such that $\mathit{sel}_{A,B}(Q_1) =
(t_A, o_1, t_B)$ and $\mathit{sel}_{A,B}(Q_2) = (t_A, o_2, t_B)$ then $\mathit{sel}_{A,B}(Q)
= (t_A, o_1 + o_2, t_B)$ where $o_1 + o_2$ is defined by the table in
Fig.~\ref{algebra:disjunction}. 
Quite similarly, for $Q$ defined by $p_1 \cdot p_2$, we have $\mathit{sel}_{A,B} (Q) = \Sigma_{C \in \Theta} \mathit{sel}_{A,C}(Q_1) \cdot \mathit{sel}_{C,B}(Q_2)$
where $\mathit{sel}_{A,C}(Q_1) \cdot \mathit{sel}_{C,B}(Q_2) = (t_A, o_1 \cdot o_2, t_B)$
for $\mathit{sel}_{A,C}(Q_1) = (t_A, o_1, t_C)$ and  $\mathit{sel}_{C,B}(Q_2) = (t_C, o_2, t_B)$,
and where $o_1 \cdot o_2$ is defined by the table in Fig.~\ref{algebra:conjunction}. Finally, if $Q$ is defined by the regular expression $p^*$ where $p$ defines a query $Q'$, we assign a selectivity class to $Q$ if and only if the input and output types of $Q'$ are the same, in which case $\mathit{sel}_{A,A}(Q) = \mathit{sel}_{A,A}(Q') \cdot \mathit{sel}_{A,A}(Q')$.

\begin{figure}[t]
\subfigure[\label{algebra:disjunction}Disjunction.]{
\centering
$\begin{array}{|c|ccccc|}
\hline
+  & =  & <  & >  & \diamond & \times \\
\hline
=  & =  & <  & >  & \diamond & \times \\
<  & <  & <  & \diamond & \diamond & \times \\
>  & >  & \diamond & >  & \diamond & \times \\
\diamond & \diamond & \diamond & \diamond & \diamond & \times \\
\times  & \times  & \times & \times & \times  & \times \\
\hline 
\end{array}$
}
\hspace{0.5cm}
\subfigure[\label{algebra:conjunction}Conjunction.]{
\centering
$\begin{array}{|c|ccccc|}
\hline
\cdot   & =  & <  & >  & \diamond & \times \\
\hline
=  & =  & <  & >  & \diamond & \times \\
<  & <  & <  & \times  & \times  & \times \\
>  & >  & \diamond & >  & \diamond & \times \\
\diamond & \diamond & \diamond & \times  & \times  & \times \\
\times  & \times  & \times  & \times  & \times  & \times \\
\hline
\end{array}$
}
\vspace*{-0.2cm}
\caption{\label{fig:twotab}Algebra for selectivity classes.
To read in a (column, row) order.}
\vspace*{-0.2cm}
\end{figure}

As a remark, if either $t_A$ or $t_B$ is $1$, the operator solely relies on the other one.
Hence, the triples $(1, \times, 1)$ and $(1, \diamond, 1)$ are not permitted, which makes $(1,=,1)$, $(1,<,N)$ and $(N,>,1)$ the only permitted triples that contain a $1$.
However, in the computation of the algebraic expression, we could still
obtain triples $(1, \times, 1)$ and $(1, \diamond, 1)$, that we should
replace with $(1,=,1)$ if the case occurs.

Finally, the {\em estimated selectivity value} of a query is obtained directly
from its class. If the obtained selectivity class of a query $Q$
is $(1,=,1)$, then $\hat{\alpha}(Q) = 0$, if we obtain $(N,\times,N)$, then
$\hat{\alpha}(Q) = 2$, and $\hat{\alpha}(Q) = 1$ for all other cases.

\subsubsection{Data structures}\label{sec:data:structures}

The data structures needed to estimate the selectivity are: (a) the schema graph $\mathbb{G}_{\mathcal{S}}$, (b) the distance matrix $\mathcal{D}$, and (c) the selectivity graph $\mathbb{G}_{\mathit{sel}}$.

{\smallskip\noindent\bf (a) Schema graph $\mathbb{G}_{\mathcal{S}}$.}
We derive the schema graph from the schema $\mathcal{S}$.
Each node in $\mathbb{G}_{\mathcal{S}}$ is a pair given by a node type of the schema and a selectivity triple (cf.\ Section~\ref{sec:estimating}) associated to that type. 
More formally, the set of nodes of the schema graph $\mathbb{G}_{\mathcal{S}}$, denoted $\mathit{SelType}(\mathcal{S})$, consists of tuples $(T,(t_1, o, \mathsf{Type}(T)))$, where (i) $T$ is a node type from $\Theta$ and (ii) $(t_1, o, \mathsf{Type}(T))$ is a selectivity triple in the set of all possible selectivity triples $\{(1,=,1), (1,<,N), \ldots\}$.
The edges of $\mathbb{G}_{\mathcal{S}}$ are labeled with symbols in $\Sigma^+$.

The goal of the schema graph $\mathbb{G}_{\mathcal{S}}$ is to indicate how a path ending with a type $T$ of selectivity triple $(t_1, o, \mathsf{Type}(T))$ changes when it is extended with an edge in $\Sigma^+$. 
Formally, given a node $(T,(t_1, o, \mathsf{Type}(T)))$ and a label (or label inverse) $a\in\Sigma^+$ such that the schema allows an $a$-labeled edge between $T$ and a node type $T'$, if according to our algebra we have $(t_1, o, \mathsf{Type}(T)) \cdot \mathit{sel}_{T,T'}(a) = (t_1,o',\mathsf{Type}(T'))$, then in the schema graph $\mathbb{G}_{\mathcal{S}}$ there is an edge $((T,(t_1, o, \mathsf{Type}(T))), a, (T',(t_1,o',$ $\mathsf{Type}(T')))$.

\begin{example}
\normalfont
Recall Examples~\ref{ex:schema:graph} and~\ref{example:types}.
We illustrate a snippet of the corresponding schema graph in Fig.~\ref{fig:schemagraph}.
For instance, the node $(T_1,(N,=,N))$ is due to the fact that $\mathsf{Type}(T_1) = N$ and recall from Section~\ref{sec:estimating} that for a given type $A$ we have $\mathit{sel}_{A,A}(\varepsilon) = (\mathsf{Type}(A), =, \mathsf{Type}(A))$.
Moreover, our schema allows an $a$-labeled edge between two nodes of type $T_1$, following a Zipfian out-degree distribution, hence its selectivity triple is $(N, <, N)$, which explains the node $(T_1,(N,<,N))$ in Fig.~\ref{fig:schemagraph}.
Additionally, there is an $a$-labeled edge in our schema graph between the nodes $(T_1, (N,=,N))$ and $(T_1, (N,<,N))$ because in our algebra $(N,=,N)\cdot (N,<,N) = (N,<,N)$.  
\hfill\ensuremath{\square}\end{example}

\begin{figure}[t]
\centering
\begin{tikzpicture}[yscale=0.6]
\tikzstyle{fake}=[rectangle,draw,dotted]
\tikzstyle{node}=[rectangle,draw,rounded corners=3pt]

	\node[node] (v1) at (-3, 2) {\tiny$T_1,(N,<,N)$};
	\node[node] (v2) at (0, 2) {\tiny$T_1,(N,\diamond,N)$};
	\node[fake] (v3) at (2, 2) {};
	\node[node] (v4) at (-3, 0) {\tiny$T_1,(N,=,N)$};
	\node[node] (v5) at (0, 0) {\tiny$T_2,(N,=,N)$};
	\node[node] (v6) at (3, 0) {\tiny$T_3,(N,>,1)$};
	\node[fake] (v7) at (-3, -2) {};
	\node[fake] (v7b) at (-2, -2) {};
	\node[node] (v8) at (0, -2) {\tiny$T_2,(N,\times,N)$};

	\path[->] 
   (v4) edge [left] node [align=center]  {$a$} (v1)
	(v1) edge [above] node [align=center]  {$a^-$} (v2)
	(v2) edge [above] node [align=center]  {$a$} (v3)
	(v4) edge [above,bend left=15] node [align=center]  {$b$} (v5)
	(v5) edge [below,bend left=15] node [align=center]  {$b^-$} (v4)
	(v5) edge [above] node [align=center]  {$b$} (v6)
	(v6) edge [right, bend left=15] node [align=center]  {$b^-$} (v8)
	(v8) edge [left, bend left=15] node [align=center]  {$b$} (v6)
	(v8) edge [below] node [align=center]  {$b^-$} (v7b)
	(v4) edge [left] node [align=center]  {$a^-$} (v7)

	(v2) edge [loop,above] node [align=center]  {$a^-$} (v2)
	(v1) edge [loop,above] node [align=center]  {$a$} (v1)
	(v5) edge [loop,above] node [align=center]  {$b$} (v5)
;
\end{tikzpicture}
\vspace*{-0.2cm}
\caption{\label{fig:schemagraph}A snippet of the schema graph for our running example.}
\vspace*{-0.2cm}
\end{figure}

{\smallskip\noindent\bf (b) Distance matrix $\mathcal{D}$.}
The distance matrix $\mathcal{D}$  establishes for each pair 
$n, n' \in \mathit{SelType}(\mathcal{S})$ the length  $\mathcal{D}(n,n')$ of the shortest path between $n$ and $n'$ in $\mathcal{D}(n, n')$ in $\mathbb{G}_{\mathcal{S}}$.

{\smallskip\noindent\bf (c) Selectivity graph $\mathbb{G}_{\mathit{sel}}$.}
The selectivity graph $\mathbb{G}_{\mathit{sel}}$ is an unlabeled directed graph whose nodes are $\mathit{SelType}(\mathcal{S})$. An edge exists between two nodes $n$ and $n'$ if there exists a path between $n$ and $n'$ of length in $[l_{\min}, l_{\max}]$ (recall that this interval is specified as part of the query size in the workload configuration $\mathbb{Q}$). 

\begin{example}
\normalfont
We illustrate the selectivity graph for our running example in Fig.~\ref{fig:selectivitygraph}. 
As an example, there exists a path between $T_1,(N,=,N)$ and $T_2,(N,\times,N)$ (for instance, the path $b \cdot b \cdot b^-$), whose length is less than $l_{\max} = 4$. However, there does not exist such a path between $T_2,(N,\times,N)$ and $T_1,(N,=,N)$, thus there is no edge between those nodes.
\hfill\ensuremath{\square}\end{example}

\begin{figure}[t]
\centering
\begin{tikzpicture}[yscale=0.5]
\tikzstyle{fake}=[rectangle,draw,dotted]
\tikzstyle{node}=[rectangle,draw,rounded corners=3pt]

	\node[node] (v1) at (-3, 2) {\tiny$T_1,(N,=,N)$};
	\node[node] (v2) at (0, 2) {\tiny$T_2,(N,=,N)$};
	\node[node] (v3) at (-3, 0) {\tiny$T_3,(N,>,1)$};
	\node[node] (v4) at (0, 0) {\tiny$T_2,(N,\times,N)$};

	\path[->] 
   (v1) edge [bend left=15] node [align=center]  {} (v2)
   (v2) edge [bend left=15] node [align=center]  {} (v1)
   (v1) edge [] node [align=center]  {} (v3)
   (v1) edge [] node [align=center]  {} (v4)
   (v2) edge [] node [align=center]  {} (v3)
   (v2) edge [] node [align=center]  {} (v4)
   (v3) edge [bend left=15] node [align=center]  {} (v4)
   (v4) edge [bend left=15] node [align=center]  {} (v3)

	(v1) edge [loop,above] node [align=center]  {} (v1)
	(v2) edge [loop,above] node [align=center]  {} (v2)
	(v3) edge [loop,above] node [align=center]  {} (v3)
	(v4) edge [loop,above] node [align=center]  {} (v4)
;
\end{tikzpicture}
\vspace*{-0.2cm}
\caption{\label{fig:selectivitygraph}An excerpt of the selectivity graph
derived from the schema graph in Fig.~\ref{fig:schemagraph}.}
\vspace*{-0.2cm}
\end{figure}

\subsubsection{Accommodating selectivity in query generation}\label{sec:alg:revisited}

We show how the instantiation of the placeholders (line~4 in Fig.~\ref{alg:query-general}) should be implemented to support selectivity estimation for binary queries.
Assume that we have already generated the query skeleton and picked the two projection variables (lines~2-3).

We first build a function $T$ that associates to each placeholder $P$ of the query skeleton a {\em selectivity type} $(T_1, (\mathsf{Type}(T_1),o,\mathsf{Type}(T_2)), T_2)$, where $T_1$ and $T_2$ are schema types, and $o$ is a selectivity operator. This function exhibits two properties. First, the input and output type of each selectivity operator should be consistent with the types of $T_1$, $T_2$ and secondly, selectivity values for each type should be selected in a way to guarantee that the global selectivity class for the query is the expected one.
To achieve this, we randomly choose a path on the selectivity graph $\mathbb{G}_{\mathit{sel}}$. 
This basically means that we want to find a path between a node with selectivity triple  $(?,=,?)$ (where by ``?'' we denote any type) to a node with  $(T_1, o, T_2)$ yielding one of the desired selectivities. The length of this path should be consistent with the configuration.
Some of the placeholders contain Kleene stars, with a probability dictated by the configuration.
Such conjuncts inherit the input and output types of their neighbor conjuncts, with the selectivity operator '='.

\begin{example}
\normalfont
In our running example, assuming that we look for a linear query with $3$ conjuncts, we can instantiate the function $T$ as follows: 
\begin{gather*}
T(P_1) = (T_1, (N,=,N), T_1), ~~~
T(P_2) = (T_1, (N,>,N), T_2),\\
T(P_3) = (T_2, (N,=,N), T_2).
\end{gather*}
We can then compute the selectivity of the concatenation, which is $(N,=,N) \cdot (N, >, N) \cdot (N, =, N) = (N, >, N)$, which corresponds to a linear query.
\hfill\ensuremath{\square}\end{example}

Note that drawing uniformly at random paths of a certain length in $\mathbb{G}_{\mathit{sel}}$ can be done efficiently with a two-step algorithm: first, 
each node $n$ is associated with a function $\mathit{nb\_path}(n,i)$ that gives the
number of paths of length $i$ that can be generated starting from $n$. For
instance, for a quadratic query, a node $n_1 : (A, (N, \times, N))$ has $\mathit{nb\_path}(n_1,0) = 1$
whereas a node $n_2 : (B, (1 = 1 ))$ has $\mathit{nb\_path}(n_2,0) = 0$. Other
values are obtained by a saturation algorithm: to generate a path of
length $l$, the algorithm picks a starting node with a random draw weighted
by $\mathit{nb\_path}(n,l)$, and then picks the label of an outgoing edge  to
a node $n'$ with a random draw weighted by $\mathit{nb\_path}(n', l-1)$, etc.
until all the nodes are saturated.

Next, for each placeholder we build a path skeleton that satisfies the constraints concerning the number of disjuncts and the length of the paths in the query size, as illustrated by the following example.

\begin{example}
\normalfont
To continue our running example, with a number of disjuncts in the range $[d_{\min},d_{\max}]=[3,5]$ and path length in the range $[l_{\min},l_{\max}]=[2,4]$, we may have a path skeleton as follows: $P_1 = X_1 \cdot X_2 + X_3 \cdot X_4 \cdot X_5 + X_6 \cdot X_7$ (i.e., three disjuncts having path length from 2 to 3). 
\hfill\ensuremath{\square}\end{example}
Once a path skeleton is computed, we need to find actual edge labels for each path, by randomly choosing paths in $\mathbb{G}_{\mathit{sel}}$. For the sake of conciseness, we omit the details and we illustrate it 
on the following example. 

\begin{example}
\normalfont
Let us consider $P_1 = X_1 \cdot X_2 + X_3 \cdot X_4 \cdot X_5 + X_6 \cdot X_7$, with $T(P_1) = (T_1, (N,=,N), T_1)$. The query workload generation algorithm may obtain a path instantiation of this kind: $X_1= b$ and $X_2 = b^-$ as the path $b \cdot b^-$ can go from $T_1$ to $T_2$ and from $T_2$ to $T_1$ via a concatenation of two paths with edge label $b$. 
\hfill\ensuremath{\square}\end{example}
The total size of the data structures described in Section~\ref{sec:data:structures} is quadratic in the size of the graph schema (due to the distance matrix) and the running time of the selectivity estimation algorithm inherits the quadratic behavior.
However, we observed that the query workload generation is very efficient, as we point out in more details at the end of Section~\ref{subsect:scala} for our default schemas and for the three existing schemas that we encoded in $\mathsf{gMark}$.
Moreover, it may not be possible that all constraints are satisfied at the same time and testing satisfiability is intractable (cf.\ Section~\ref{sec:intractability-generation}).
Our algorithm always outputs a result to the user.
More precisely, when instantiating the placeholders, it may be the case that we cannot fill the path skeleton with paths of the precise lengths and meeting the required selectivities, hence we choose to relax the path length in order to ensure accurate selectivity estimation and efficiency (instead of backtracking and drawing new skeletons).

\vspace*{-0.1cm}

\section{Empirical evaluation of gmark} \label{sec:gmarkStudy}
In this section, we empirically evaluate $\mathsf{gMark}$ w.r.t.\ two important aspects: the capability to encode the application domain of existing benchmarks (Section~\ref{subsec:scenarios}), and its quality in terms of both accuracy of the estimated selectivities and scalability of the generator (Section~\ref{subsect:scala}).

{\smallskip\noindent\bf Environment.} 
All experiments reported in this section and in Section~\ref{sec:databasesStudy} were run on an Intel Core i7 920, with 6GB RAM, and running Ubuntu 14.04 64bit.

\subsection{Coverage of practical graph scenarios}\label{subsec:scenarios}
In our experiments, we relied on four use cases: the default $\mathsf{gMark}$ use case, and three $\mathsf{gMark}$ encodings of the schemas of existing state-of-the-art benchmarks~\cite{AlucHOD14, Erling2015, SchmidtHLP09} that were possible since $\mathsf{gMark}$ can be easily tuned to fit an arbitrary set of predicates, node types, and schema constraints.

$\bullet$ $\texttt{Bib}$ is our default scenario, describing the bibliographical database introduced in Section~\ref{sec:motivating-example} as our motivating example.
It represents a baseline, illustrating the main $\mathsf{gMark}$ features, in particular all types of degree distributions. 

$\bullet$ $\texttt{LSN}$ is our $\mathsf{gMark}$ encoding of the fixed schema provided with the LDBC Social Network Benchmark~\cite{Erling2015}, which simulates user activity in a social network.

$\bullet$ $\texttt{SP}$ is our $\mathsf{gMark}$ encoding of the fixed DBLP-based~\cite{dblp} schema provided with SP2Bench~\cite{SchmidtHLP09}.

$\bullet$ $\texttt{WD}$ is our $\mathsf{gMark}$ encoding of the default schema provided with Waterloo SPARQL Diversity Test Suite (WatDiv)\,\cite{AlucHOD14}, which differs from LDBC and SP2Bench, and is similar to $\mathsf{gMark}$ in the sense that it also supports user-defined schemas via a data set description language.
The default schema that we encoded is about users and products.

The difference between $\mathsf{gMark}$ and the aforementioned benchmarks resides in the kind of expressible schema constraints that are incomparable.
Nonetheless, we have been able to encode their key characteristics, which include for instance node types, edge labels, and associations between entities. By opposite, we could not encode other characteristics that are typical of those benchmarks, such as subtyping and hardcoded correlations that we do not support in $\mathsf{gMark}$. 
We detail in Appendix~A in our technical report~\cite{BBCFLA16} the expressiveness differences between $\mathsf{gMark}$ and existing benchmarks, and our encoding choices of their sche\-mas.

{\smallskip\noindent\bf Discussion on the query loads.} 
Because of these differences, the $\mathsf{gMark}$ query loads are different from the fixed query loads in the other benchmarks. 
Nevertheless, although $\mathsf{gMark}$ does not generate exactly the same queries of the other benchmarks, it can easily be tuned to generate queries with similar characteristics, i.e., queries of comparable query shape and size and featuring the same selectivity of the original queries of those benchmarks. 
To concretely point out that this is the case, we report in Fig.~\ref{fig:plots0} an experiment in which we considered the execution times of three queries (one per  selectivity class) from the original SP2Bench query load and 
three comparable queries of the same shape, size and selectivity generated with $\mathsf{gMark}$ using $\texttt{SP}$. 
We observe that the queries generated and executed with $\mathsf{gMark}$ show the same asymptotic runtime behavior of the original SP2\-Bench queries executed under SP2Bench. Precisely, we can notice in Fig.~\ref{fig:plots0} that the obtained simulated constant (linear and quadratic, respectively) query in $\mathsf{gMark}$ falls in the same selectivity class of the original constant (linear and quadratic, respectively) query in SP2Bench.
Since the goal of $\mathsf{gMark}$ is not to simulate the exact query loads of other existing benchmarks, in the rest of the experimental study we only rely on the $\mathsf{gMark}$ encodings of their schemas $\texttt{LSN}$, $\texttt{SP}$, $\texttt{WD}$ (as well as our user-defined schema $\texttt{Bib}$) to generate more diverse query loads, going beyond the existing ones w.r.t.\ the fine-grained control of different characteristics such as size, selectivity, and recursion.

\begin{table}[t]\centering
\caption{$\alpha(Q)$ averaged across constant, linear, and quadratic queries (with standard deviation), with varying graph sizes, data, and query diversity.}
\label{table:averages}
\vspace*{-0.3cm}\scriptsize
\begin{tabular}{|l|c|c|c|}\hline
   & \em Constant & \em Linear & \em Quadratic   \\
  \hline
  $\texttt{LSN}$\em-Len & 0.200$\pm$0.417 & 1.189$\pm$0.261 & 2.032$\pm$0.059 \\
  $\texttt{LSN}$\em-Dis & 0.182$\pm$0.364 & 1.325$\pm$0.318 & 2.046$\pm$0.074 \\
  $\texttt{LSN}$\em-Con & 0.190$\pm$0.391 & 1.244$\pm$0.326 & 2.017$\pm$0.032 \\
  $\texttt{LSN}$\em-Rec & 0.196$\pm$0.409 & 1.090$\pm$0.492 & 1.564$\pm$0.889 \\\hline
  $\texttt{Bib}$\em-Len & 0.003$\pm$0.010 & 0.921$\pm$0.122 & 1.405$\pm$0.337 \\
  $\texttt{Bib}$\em-Dis  & 0.000$\pm$0.000 & 0.995$\pm$0.012 & 1.607$\pm$0.261 \\
  $\texttt{Bib}$\em-Con  & 0.023$\pm$0.029  & 0.986$\pm$0.112 & 1.409$\pm$0.296 \\
  $\texttt{Bib}$\em-Rec  & 0.100$\pm$0.316 & 0.982$\pm$0.073 & 1.493$\pm$0.335 \\\hline
  $\texttt{WD}$\em-Len &  0.016$\pm$0.044 & 1.427$\pm$0.392 & 2.004$\pm$0.022\\
  $\texttt{WD}$\em-Dis  & 0.009$\pm$0.022 & 1.412$\pm$0.380 & 1.999$\pm$0.014\\
  $\texttt{WD}$\em-Con  & -0.010$\pm$0.026 & 1.540$\pm$0.495 & 1.750$\pm$0.708\\
  $\texttt{WD}$\em-Rec  & 0.587$\pm$0.830 & - & 1.976$\pm$0.012 \\\hline
  $\texttt{SP}$\em  &  0.074$\pm$0.130 & 1.064$\pm$0.034 & 2.034$\pm$0.295 \\\hline
  \end{tabular}
\end{table}

\begin{figure}[t]\centering
\includegraphics[width=5.7cm]{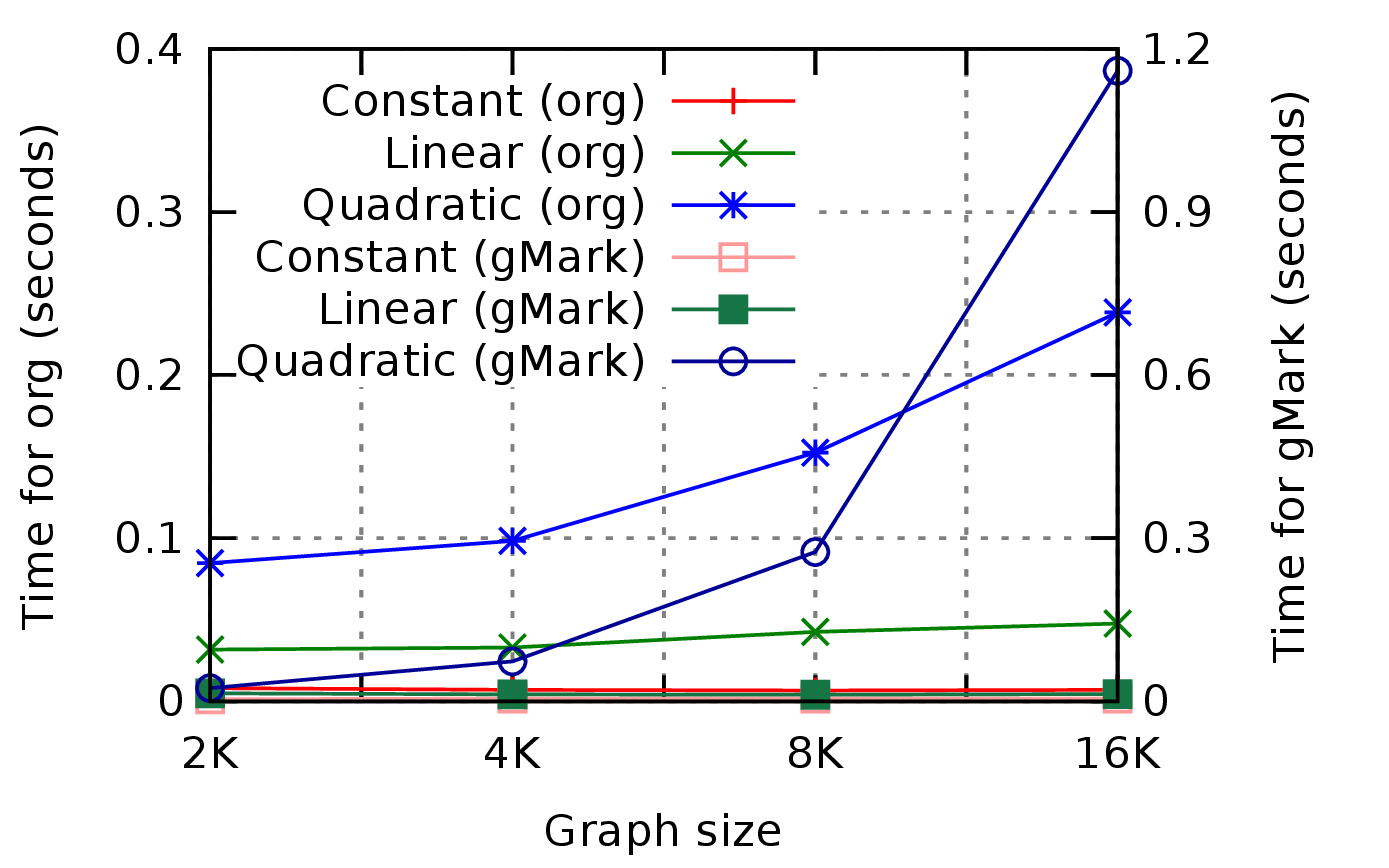}
\vspace*{-0.3cm}
 \caption{\label{fig:plots0} Evaluation times of a constant, a linear and a quadratic query in the original SP2Bench query load (org) and $\mathsf{gMark}$, respectively.}
\vspace*{-0.3cm}
\end{figure}

\begin{figure*}\centering
\subfigure[Est.\,selectivities:\,$\texttt{Bib}$\em -Len.]{
\includegraphics[width=0.2\textwidth]{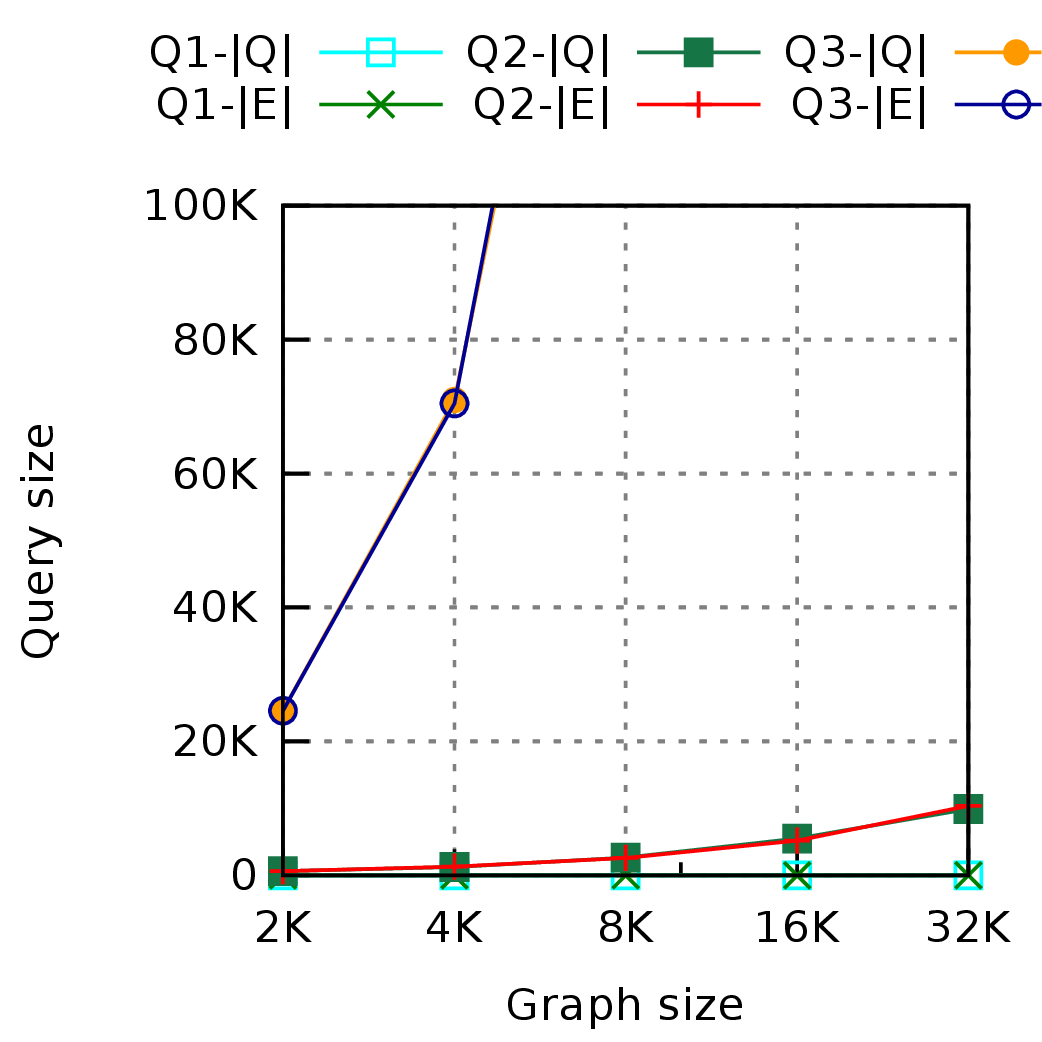}
} ~~~~~~
\subfigure[Est.\,selectivities:\,$\texttt{Bib}$\em -Con.]{
\includegraphics[width=0.2\textwidth]{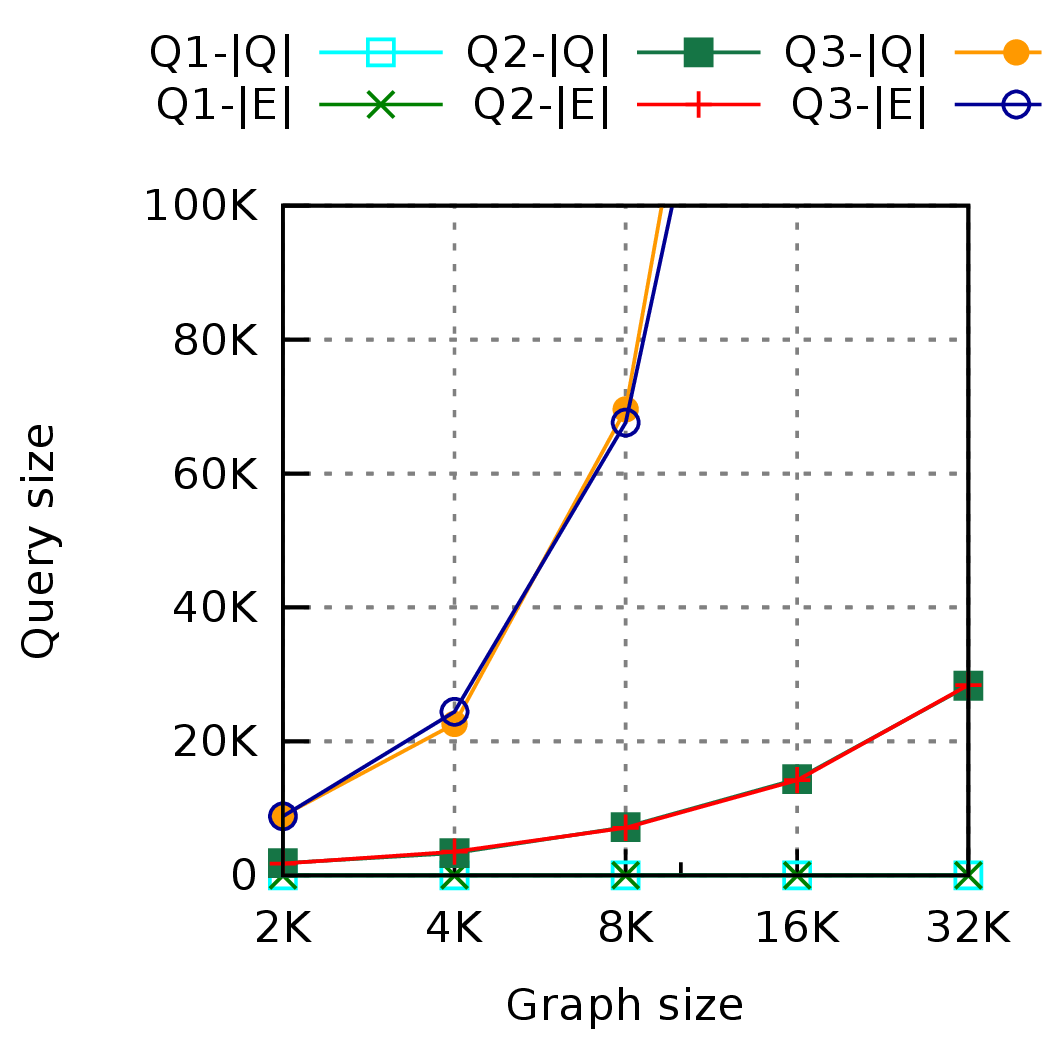}
}~~~~~~
\subfigure[Est.\,selectivities:\,$\texttt{Bib}$\em -Dis.]{
\includegraphics[width=0.2\textwidth]{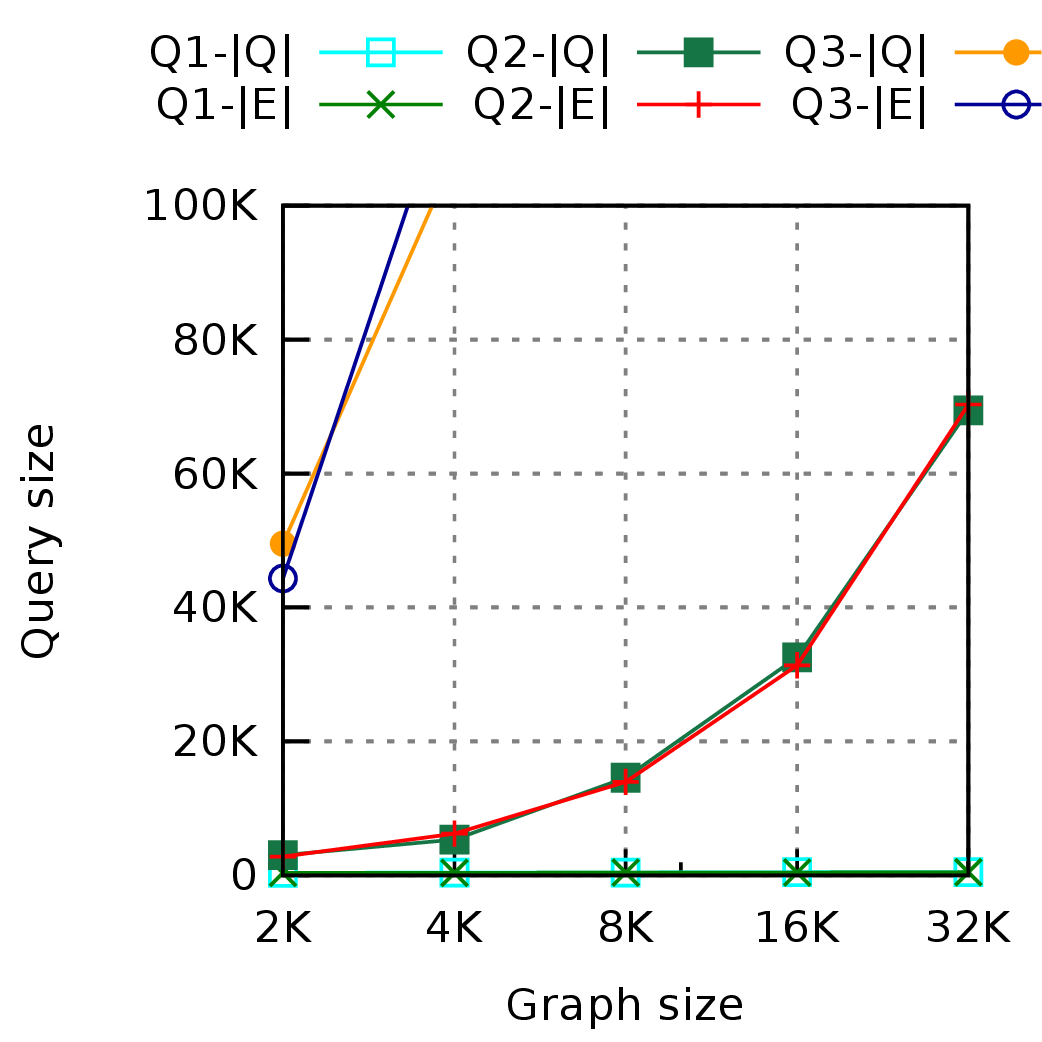}
}~~~~~~
\subfigure[Est.\,selectivities:\,$\texttt{Bib}$\em -Rec.]{
\includegraphics[width=0.2\textwidth]{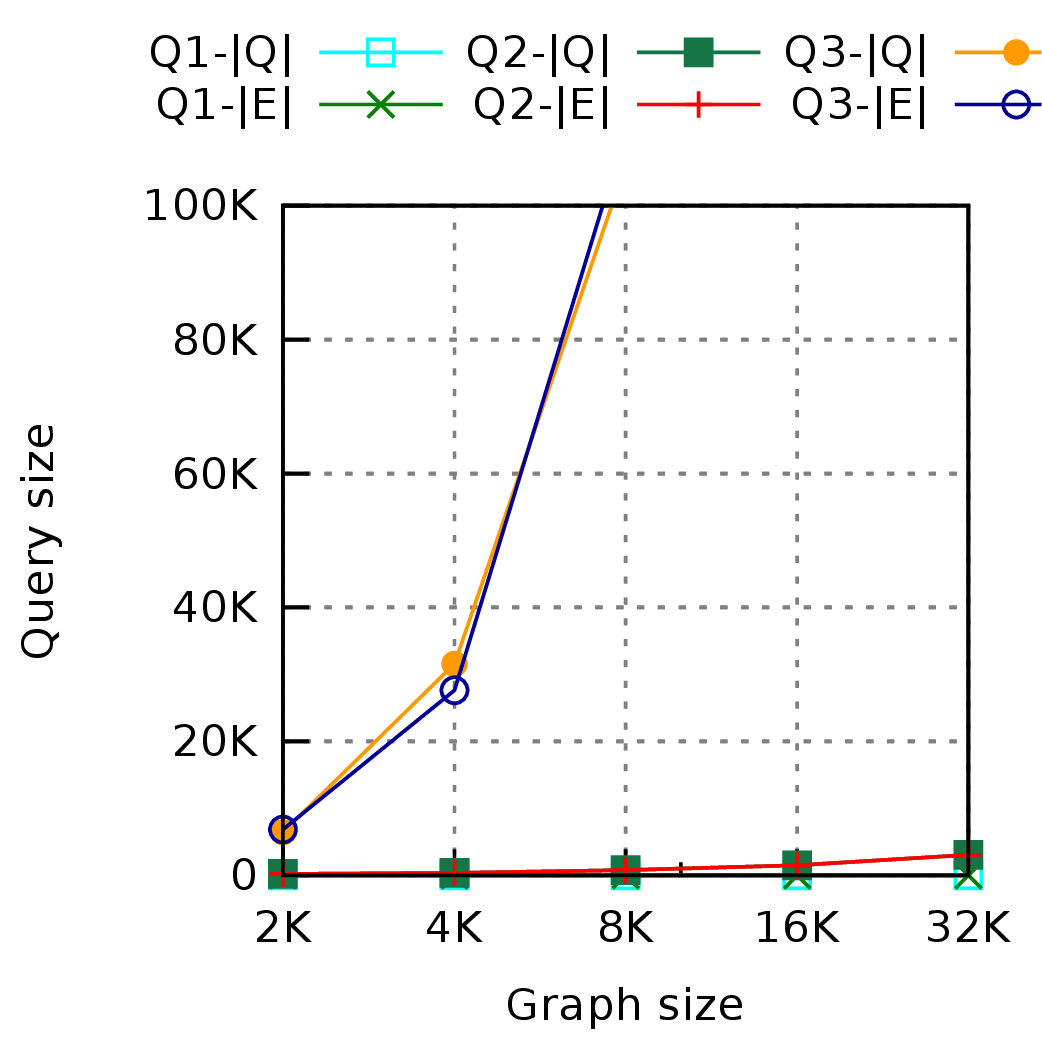}
}
\vspace*{-0.2cm}
\caption{\label{fig:plots1}Summary of selectivity estimation 
quality results for the bibliographical use case $\texttt{Bib}$.}
\vspace*{-0.3cm}
\end{figure*}

\subsection{Quality and scalability}
\label{subsect:scala}

{\smallskip\noindent\bf Selectivity estimation quality study.}
Our first set of experiments focuses on understanding the quality of the
selectivity estimation performed using the algebra presented in Section~\ref{sec:querygen}. 
We used four types of workloads in which we stress-test several diverse queries with incrementally varied query size:
{\em Len} generates queries with varying path lengths, no disjuncts, no conjuncts, and no recursion; 
{\em Dis} generates queries with disjuncts, no conjuncts and no recursion; 
{\em Con} generates queries with conjuncts and disjuncts and no recursion;
{\em Rec} generates queries with recursion (Kleene-stars). 
Each of the above tests
leads to a workload of $30$ queries, out of which $10$ are constant, $10$ are linear, and $10$ are quadratic. 
We repeated these experiments for each of the $\texttt{Bib}$, $\texttt{WD}$, and $\texttt{LSN}$ use cases (cf.\ Section~\ref{subsec:scenarios}).
We executed each query $Q$ on graph instances of sizes between $2$K and $32$K, 
and we counted the number of results returned on each instance. 
To compute the $\alpha$-value in the formula $|Q(G)| = \beta
|G|^\alpha$ (cf.\ Section~\ref{sec:querygen}), we computed a simple linear regression between $\log |G|$ and
$\log |Q(G)|$.  
We averaged the $\alpha$-values obtained across all queries belonging to the
same selectivity class.
We report these results, together with the standard deviation, in Table~\ref{table:averages}. 
As expected, we observe that, for all combinations of use cases and diverse query workloads,
the estimated values of $\alpha$ are $\approx\!0$ for constant queries,
$\approx\!1$ for linear queries, and $\approx\!2$ for quadratic queries.  The missing value in Table~\ref{table:averages} corresponds to linear recursive queries for $\texttt{WD}$. 
In the presence of recursion, we actually observed numerous failures on the majority of the studied systems (cf.\ Section~\ref{sec:databasesStudy}).

Finally, the last row ($\texttt{SP}$) of Table~\ref{table:averages} shows the estimated $\alpha$-values for SP2Bench~\cite{SchmidtHLP09} on a set of queries following our $\mathsf{gMark}$ encoding of the original set of SP2Bench queries. In the remainder of our study in Section~\ref{sec:databasesStudy}, we disregard 
$\texttt{SP}$ since this use case does not bring more insights than the query sets generated for the other use cases.

To further illustrate the precision of our estimated values, we report in
Fig.~\ref{fig:plots1} the estimated selectivities ($|E|$) along with the theoretical selectivities ($|Q|$) for constant ($Q_1$), linear ($Q_2$), and quadratic ($Q_3$) queries on the $\texttt{Bib}$ use case. 
We observe that for the classes of queries of increasing expressiveness the
number of results is generally higher for quadratic queries, while it is
linearly and constantly varying for the other queries, as expected.
We also observe that the two curves representing the estimated
selectivities and the theoretical ones closely overlap in all the cases. 
Finally, notice that the above experiments are considering chain queries only. The results on other shapes and/or use cases are similar and omitted for conciseness. 

We conclude from this study that the schema-driven $\mathsf{gMark}$ selectivity estimation framework generates consistently high quality estimates, across all selectivity classes, across a broad spectrum of diverse data sets and queries.

\begin{table}[t]
\caption{Graph generation time for varying graph sizes (\# nodes) and schemas.} \label{table:gmarktimes}
\vspace*{-0.3cm}\scriptsize\centering
\begin{tabular}{|c|c|c|c|c|}
  \hline
  & $100$K & $1$M & $10$M & $100$M \\
  \hline
  $\texttt{Bib}$ & 0m0.057s & 0m0.638s & 0m8.344s & 1m28.725s \\
  \hline
  $\texttt{LSN}$  & 0m0.225s & 0m1.451s & 0m23.018s & 3m11.318s \\
  \hline
  $\texttt{WD}$ & 0m2.163s & 0m25.032s & 4m10.988s & 113m31.078s\\ 
 \hline
 $\texttt{SP}$ & 0m0.638s	 & 0m7.048s & 1m28.831s & 15m23.542s \\
 \hline
\end{tabular}
\vspace*{-0.2cm}
\end{table}

{\smallskip\noindent\bf Scalability study.}
Our second set of experiments is devoted to measuring the time taken by the
graph generator of $\mathsf{gMark}$, while varying the size
of the data and the size of the query workload. 
To gauge the robustness of our graph instance generator w.r.t.\ data and
query diversity, in these experiments we employed all four use cases: $\texttt{Bib}$, $\texttt{WD}$, $\texttt{LSN}$, and $\texttt{SP}$.
We report the results in Table~\ref{table:gmarktimes}.
We observe that the generator scales quite well for all use cases. It is quite efficient for big graph sizes in all cases except $\texttt{WD}$. This is due to the quite complex nature of its schema, which induces much denser graph instances compared to the other use cases.  
For example, $\texttt{WD}$ instances have two orders of magnitude higher number of edges than $\texttt{Bib}$ instances having the same number of nodes.
This is not a limitation of $\mathsf{gMark}$ or of the $\texttt{WD}$ scenario, but rather a specific feature of very dense graphs.

We conclude by observing that $\mathsf{gMark}$ can efficiently generate both small and large graph instances, on a diversity of practical scenarios.
We note that we also conducted a scalability study of query generation, which showed that $\mathsf{gMark}$ easily generates workloads of a thousand queries for $\texttt{Bib}$, $\texttt{LSN}$, and $\texttt{SP}$ in around one second and for the richer $\texttt{WD}$ scenario in around 10 seconds.  Query translation of a thousand queries into all four supported syntaxes for each of the four scenarios took a mere tenth of a second. This study shows that $\mathsf{gMark}$ workload generation is very efficient and scalable for large-scale complex workloads.

\section{Evaluation of query engines}
\label{sec:databasesStudy}

We next turn to an empirical evaluation of a representative selection of
currently available graph query processing engines using $\mathsf{gMark}$.  Our goal here
is both to demonstrate the new capabilities in benchmarking introduced by
$\mathsf{gMark}$, and to pinpoint limitations and areas for further improvement in
current graph query processing solutions.

\subsection{Design of experiments}
\label{subsec:experimentaldesign}

{\smallskip\noindent\bf Systems.} 
The database systems (and their supported query languages) that we consider in our study are\footnote{For obvious reasons, we obfuscate the names of the three commercial systems employed in our study.}:
\smallskip

\noindent$\bullet$ $\textbf{G}$: a native graph database (openCypher~\cite{opencypher})

\noindent$\bullet$ $\textbf{S}$: a popular SPARQL query engine (SPARQL~1.1~\cite{sparql})

\noindent$\bullet$ $\textbf{P}$: PostgreSQL v9.3.9 (SQL:1999 recursive views\footnote{We use the standard translation of $\mathsf{UCRPQ}$'s into recursive views, implemented using linear recursion~\cite{Baeza13}.}~\cite{postgres}) 

\noindent$\bullet$ $\textbf{D}$: a modern Datalog engine (Datalog~\cite{Ullman89})
\smallskip

For the sake of fairness, we used default configurations for each system i.e., without special purpose optimizations.

{\smallskip\noindent\bf Query languages.} 
Recall that the queries generated by $\mathsf{gMark}$ are $\mathsf{UCRPQ}$'s. 
We provide in Appendix~B in our technical report~\cite{BBCFLA16} a translation of an
example $\mathsf{UCRPQ}$ into each of the above concrete syntaxes.
We note that not all systems support the full
expressive power of $\mathsf{UCRPQ}$'s.  In particular, arbitrary
$\mathsf{UCRPQ}$'s can be expressed in SPARQL, SQL, and Datalog, while openCypher supports
only those $\mathsf{UCRPQ}$'s having no occurrences of inverse or concatenation under Kleene
star. In our results regarding recursive queries, some of the  generated benchmark
queries do indeed exhibit inverse and/or concatenation in a
recursive conjunct.  In these cases, the corresponding openCypher query has only the non-inverse
symbol and/or the first symbol in a concatenation of symbols, respectively.  Furthermore,
while all other languages adopt the classical homomorphic semantics for
conjunctive queries~\cite{Ullman89}, openCypher adopts an isomorphic
semantics.
For these two reasons, openCypher queries often have
answer sets that differ from that of their counterparts in the other languages, which should be kept
in mind while evaluating experimental results pertaining to system $\textbf{G}$.

{\smallskip\noindent\bf Measurements.} 
We generate and execute query workloads on a variety of
graph configurations.  We execute and measure the runtime of each query six
times. The first one is a ``cold'' run that we exclude from the computation of
the average; from the remaining five ``warm'' runs we drop the fastest and
slowest and then report the average of the remaining three execution times.
Between the execution of each query, we close and reopen the database to
clear any caching effects.  We consider the following parameters in generating
workloads: selectivity classes of the tested queries (constant, linear,
quadratic) and size (amounting to $30$ queries for each workload).  We consider
the following parameters in generating graphs: size  (from $2$K to
$16$K nodes) and use case i.e, $\texttt{Bib}$, $\texttt{LSN}$, and $\texttt{WD}$ (cf.\ Section~\ref{subsec:scenarios}). 

We make three remarks here. 
(i) Despite being small, the considered graph sizes were already sufficient to
illustrate interesting behavior and distinctions between the studied systems.  Indeed, as we discuss in Section~\ref{sec:results:of:experiments}, already on these graphs we observed that many simple queries fail on a majority of the systems.
Even for those queries that succeed, the evaluation times are often very high e.g.,
hundreds or even thousands of seconds already on instances of these sizes.
(ii) To ensure a fair comparison of all systems and to avoid measuring the time to print the query results, we added to all queries the aggregate
$\mathit{count}(\mathit{distinct} (\overline{?v}) )$, where $\overline{?v}$ is the  (binary) vector of output variables. We recall that $\mathit{distinct}$ is also necessary for our
analysis since the algebra relies on the elimination of duplicates (cf.\ Section~\ref{sec:query-generation-selectivity}).
(iii) In this study, all queries are chains, as this is the basic query shape from which the others are constructed and hence sufficient for illustrating the relative performance of current graph query processing engines.
Finally,  even though
in the presentation of the results, we focus on the default use case
$\texttt{Bib}$, we observed comparable trends for the other use cases, that we omit for the sake of conciseness.

\begin{figure*}\centering
\hspace*{-0.3cm}
\subfigure[\label{fig:plots2:constant}{\em Constant} queries.]{
\includegraphics[width=0.3\textwidth]{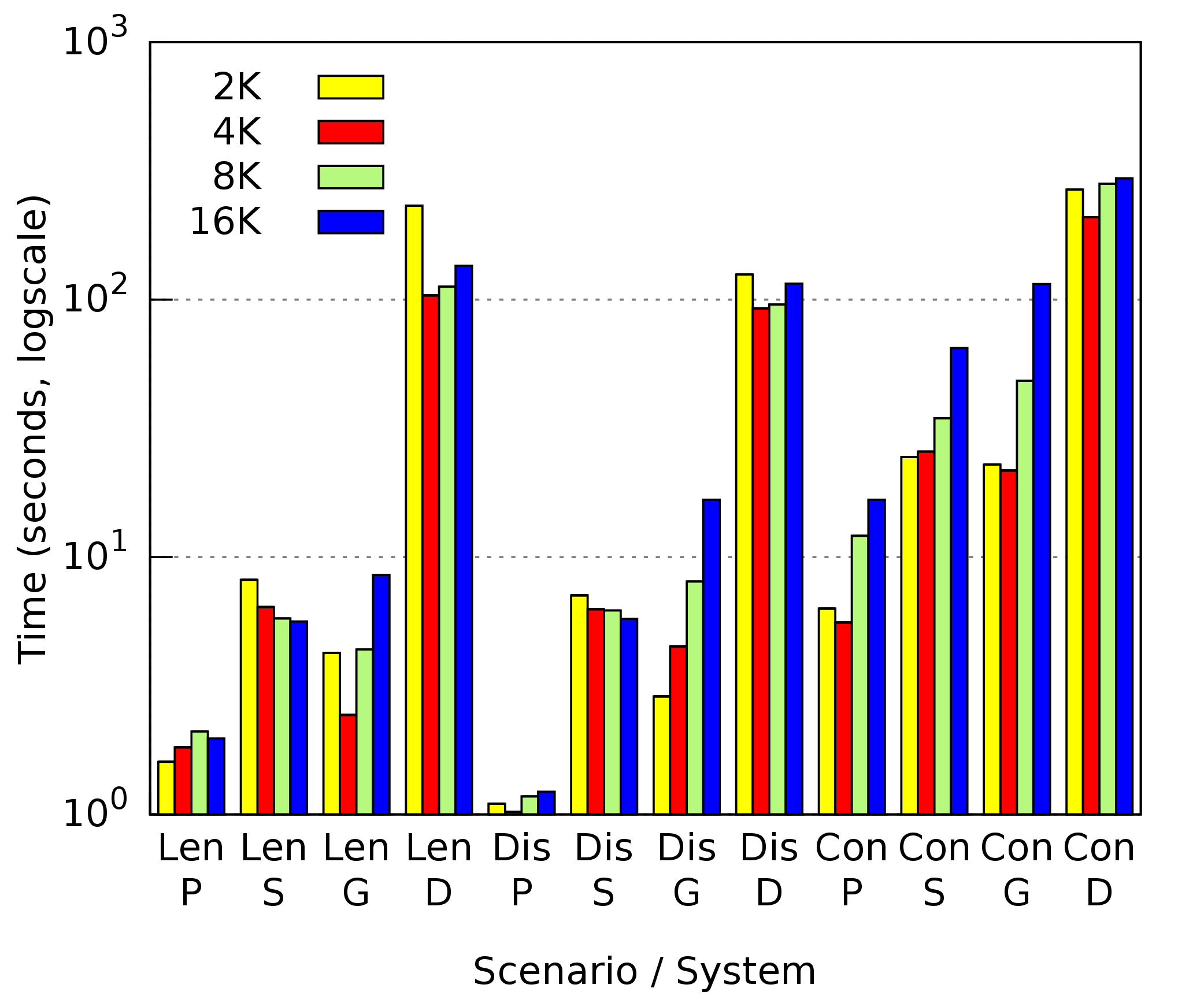}
}~~~~~~
\hspace*{-0.52cm}
\subfigure[\label{fig:plots2:linear}{\em Linear} queries.]{
\includegraphics[width=0.3\textwidth]{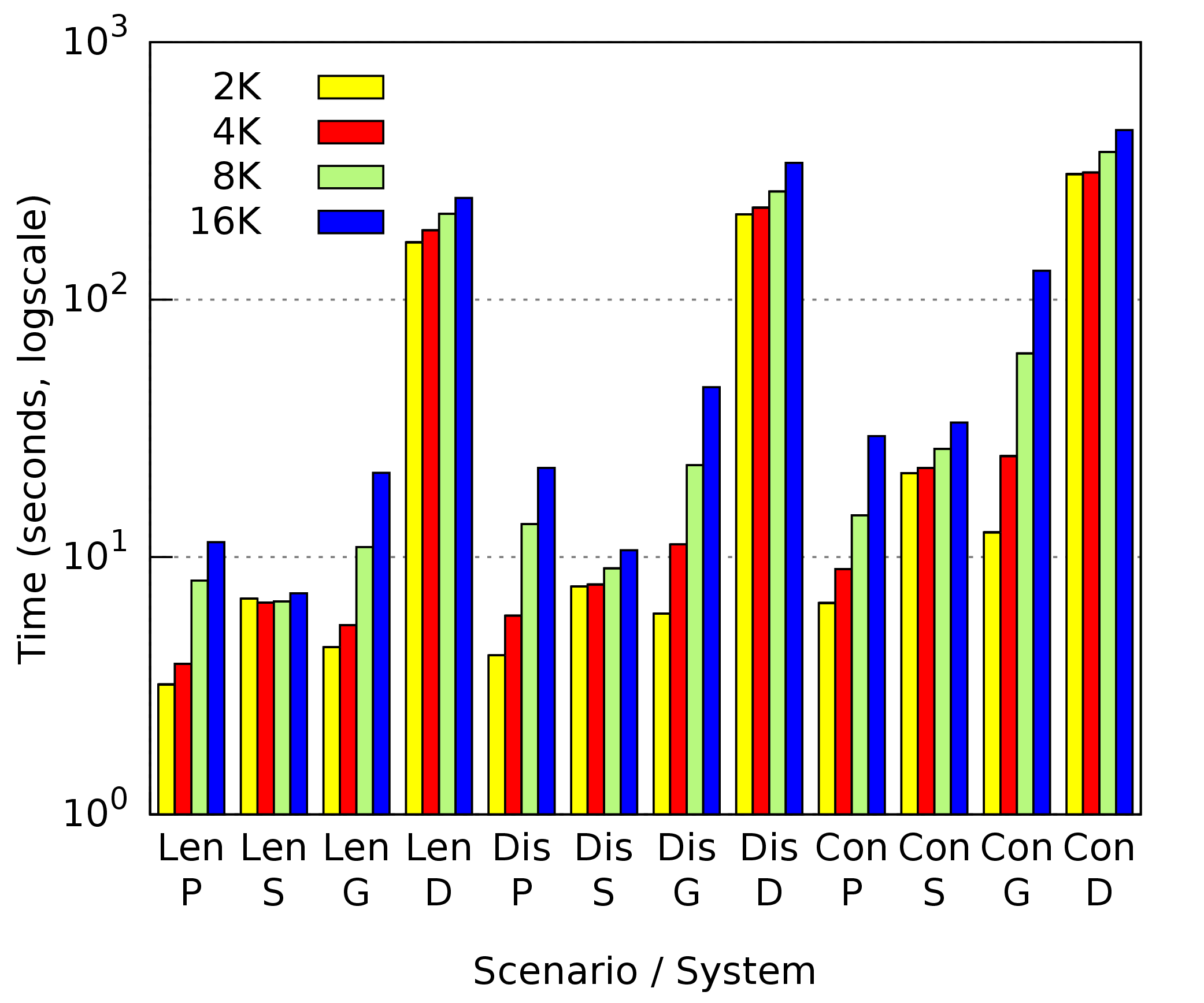}
}~~~~~~
\hspace*{-0.52cm}
\vspace*{-0.1cm}
\subfigure[\label{fig:plots2:quadratic}{\em Quadratic} queries.]{
\includegraphics[width=0.3\textwidth]{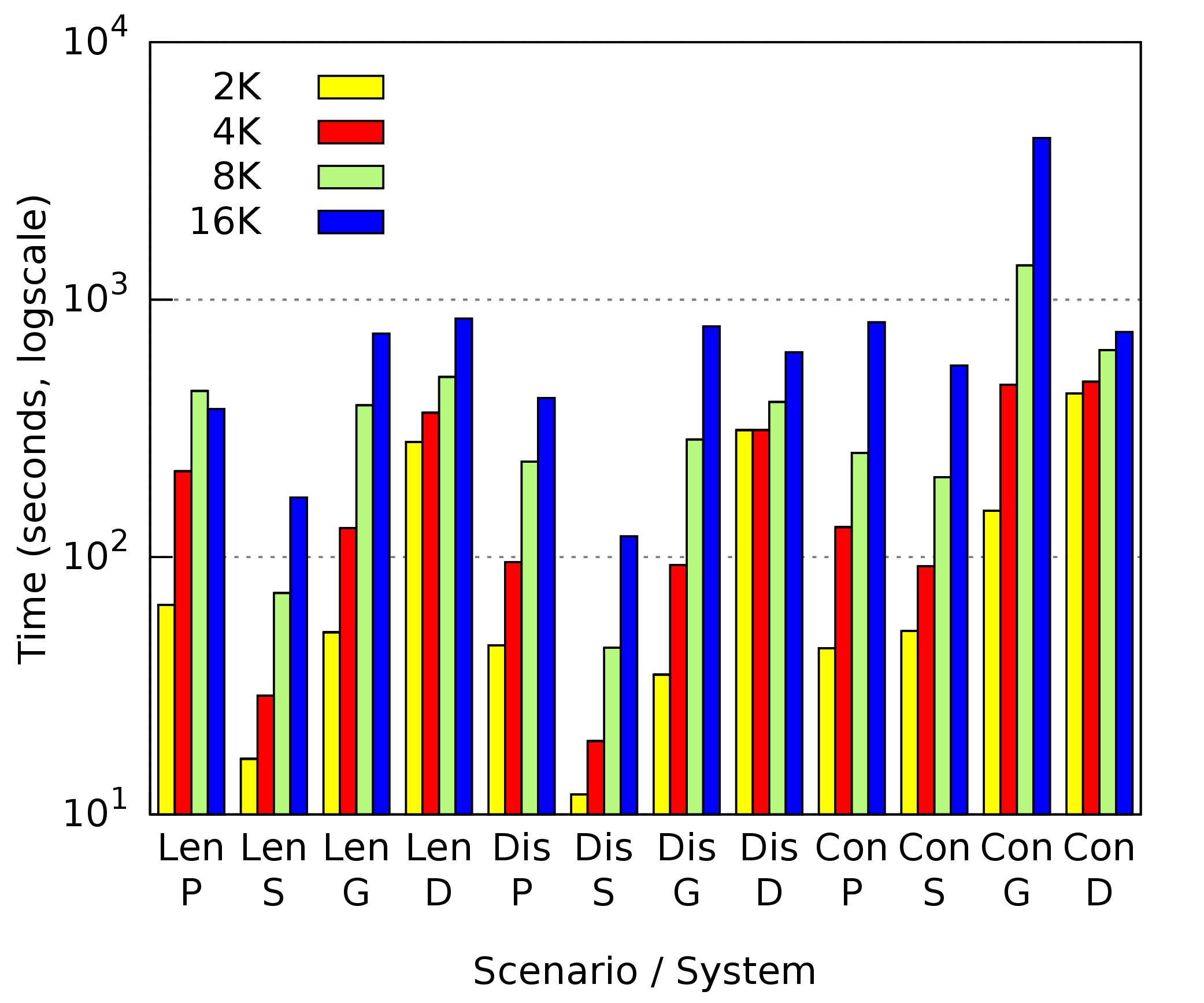}
}
\vspace*{-0.1cm}
\caption{\label{fig:plots2}Summary of query execution times for diverse query workloads ({\em Len}, {\em Dis}, {\em Con})
and various graph sizes under PostgreSQL ($\textbf{P}$) and three commercial systems: a SPARQL query engine ($\textbf{S}$), a native graph database ($\textbf{G}$), and a modern Datalog engine ($\textbf{D}$).}
\vspace*{-0.1cm}
\end{figure*}

\subsection{Results of experiments}\label{sec:results:of:experiments}
{\smallskip\noindent\bf Non-recursive queries.}
In our first experiment, we focus on the non-recursive queries i.e., 
query workloads {\em Len}, {\em Dis}, {\em Con} (cf.\ Section~\ref{subsect:scala}).
We summarize the results in Fig.~\ref{fig:plots2}.
The goal of this study is to observe how the different systems react to
these varied workloads. Fig.~\ref{fig:plots2:constant} shows the query execution
times averaged across the $10$ constant queries of each workload. 
Out of the $10$ averages obtained by the $5$ warm runs of each query, we
computed once again the overall average, by discarding two out of the $10$ averages
that have the farthest standard deviation with respect to this overall
average. This allows to capture the cases in which some of the systems fail
or give outlier results. 
In Fig.~\ref{fig:plots2:constant}, we observe that $\textbf{P}$ reacts better than $\textbf{S}$, $\textbf{G}$, and $\textbf{D}$ to query workload diversity, by exhibiting lower query evaluation times on all instance sizes. This behavior is confirmed in the case of linear queries, as shown in
Fig.~\ref{fig:plots2:linear}, for the {\em Con}
query workload on all the sizes and for the {\em Len} and {\em Dis} query workloads
for smaller sizes only i.e., $2$K and $4$K.
For larger instance sizes i.e., $8$K and $16$K, the behavior is reverted in favor of $\textbf{S}$ for linear queries. 
Then, as shown in Fig.~\ref{fig:plots2:quadratic} for quadratic queries, $\textbf{S}$ continues on this trend by beating $\textbf{P}$, $\textbf{G}$, and $\textbf{D}$.  
We also observe from all query execution times reported in Fig.~\ref{fig:plots2} that the times taken by constant and linear are
of the same order of magnitude, whereas quadratic queries, as expected, typically
exhibit an order of magnitude slowdown. There is only one system ($\textbf{D}$) for
which the differences of the behavior along the sets of linear and quadratic  
queries are blurred. 

We conclude from this study that $\mathsf{gMark}$ allows us to generate interesting
queries and diverse query workloads that already on small graph instances
stress-test state-of-the-art systems, and highlight particular strengths and
weaknesses in processing graph queries.  
As a general observation, we can further conclude that  
the straightforward standard 
implementation of  $\mathsf{UCRPQ}$'s in
PostgreSQL typically shows superior performance across a broad class of queries (i.e.
constant and linear) to that of existing dedicated systems.

{\smallskip\noindent\bf Recursive queries.}
Our second experiment is devoted to recursive queries, generated
by query workloads containing Kleene stars.
Unfortunately, all tested systems either failed on the majority of these
queries or had to be manually terminated after unexpectedly long running times.  
For these reasons, it is difficult to draw a clear conclusion on recursive
queries.
Therefore, we performed a small case analysis: we considered two recursive queries of constant and quadratic selectivity, respectively, for which we
could collect results for at least one of the four systems.
We report the results for both queries in Table~\ref{tbl:recursive:queries}.
The first query has constant selectivity. 
$\textbf{P}$ was quite slow at evaluating it on small instances and starts
failing on graphs of $8$K nodes. 
$\textbf{S}$ was able to answer this query only on the smallest graph size ($2$K). 
$\textbf{G}$ failed in all cases and always returned empty results (due to its different query semantics, as discussed in Section~\ref{subsec:experimentaldesign}). 
The only system for which we could measure the evaluation time for all sizes was $\textbf{D}$, which also turned to be the most efficient one.
The second query has quadratic selectivity and only $\textbf{D}$ was able to evaluate it.
We conclude from this study that only $\textbf{D}$ is currently able to deal with
recursive queries.

\begin{table}
\caption{\label{tbl:recursive:queries}Execution time (sec.) for recursive queries.}
\vspace*{-0.2cm}\scriptsize\centering
\begin{tabular}{|c|cccc|cccc|}\hline
\multirow{2}{*}{\hspace*{-0.1cm}\em Syst.\hspace*{-0.1cm}}
& \multicolumn{4}{c|}{\em Query 1: Graph Size} 
& \multicolumn{4}{c|}{\em Query 2: Graph Size} \\\hhline{~--------}
& $2$K & $4$K & $8$K & $16$K & $2$K & $4$K & $8$K & $16$K \\\hline 
$\textbf{P}$ & 3400 & \hspace*{-0.1cm}72113\hspace*{-0.15cm} & - & - & - & - & - & - \\
$\textbf{G}$ & - & - & - & - & - & - & - & - \\
$\textbf{S}$ & 6621 & - & - & - & - & - & - & - \\
$\textbf{D}$ & 450 & 455 & 552 & 725 & 607 & 704 & 1295 & 2095\\\hline
\end{tabular}
\vspace*{-0.1cm}
\end{table}

\section{Concluding remarks}
\label{sec:concl}
We presented $\mathsf{gMark}$, the first generator that satisfies the key criteria of being domain-independent, extensible,
schema-driven, and highly configurable also in terms of the expected query
selectivity of a given workload.  The latter is a novel contribution on its own
and is applicable to other independent benchmarks and problems.
For instance, we could envision the query workload generation in $\mathsf{gMark}$ applied to real graph data sets on top of which a schema extraction tool has been run beforehand.
Furthermore, $\mathsf{gMark}$ is the first benchmark to generate workloads
exhibiting recursive path queries, which are central to graph querying.
Our in-depth empirical study
demonstrated both the quality and practicality of $\mathsf{gMark}$. Moreover, our
experiments highlighted important limitations in the query processing
capabilities of current state-of-the-art graph processing engines, already
on small graph instances and on both recursive and non-recursive queries.

We plan to align our work on $\mathsf{gMark}$ with international benchmarking bodies such as LDBC~\cite{Erling2015}.
Looking ahead to the future work, there are many directions for
further investigation e.g., extending the selectivity estimation to $n$-ary queries.
We also aim to evangelize for the use of $\mathsf{gMark}$ by researchers in the graph data management community; a first step in this direction is our VLDB demo~\cite{BBCFLA16demo}.

\bibliographystyle{IEEEtran}
\bibliography{paper}

\vspace*{-1.2cm}
\begin{IEEEbiographynophoto}{Guillaume Bagan} (PhD, Université de Caen,
2009) is a CNRS research engineer at Université Lyon 1.
His research focuses on query evaluation for
relational and graph databases. 
\end{IEEEbiographynophoto}

\vspace*{-1.2cm}
\begin{IEEEbiographynophoto}{Angela Bonifati} (PhD, Politecnico di Milano,
2002) is a professor at Université Lyon 1, conducting research on
graph databases, data integration for complex data formats, and 
query inference.  
\end{IEEEbiographynophoto}

\vspace*{-1.2cm}
\begin{IEEEbiographynophoto}{Radu Ciucanu} (PhD, Université Lille 1 and Inria, 2015) is an associate professor at Université Blaise Pascal, Clermont-Ferrand, working mainly on graph databases and data integration.
\end{IEEEbiographynophoto}

\vspace*{-1.2cm}
\begin{IEEEbiographynophoto}{George Fletcher} (PhD, Indiana University Bloomington, 2007) is an associate professor at TU Eindhoven. He studies data-intensive systems, with a focus on the theory and engineering of query languages.

\end{IEEEbiographynophoto}

\vspace*{-1.2cm}
\begin{IEEEbiographynophoto}{Aurélien Lemay} (PhD, Université Lille 3, 2002) is an associate professor at Université Lille 3 and Inria. 
His research focuses on language theory, database theory, and grammatical inference. 
\end{IEEEbiographynophoto}

\vspace*{-1.2cm}
\begin{IEEEbiographynophoto}{Nicky Advokaat}
obtained a MSc in Computer Science in 2015 at TU Eindhoven, with a thesis on graph database benchmarking.
\end{IEEEbiographynophoto}

\end{document}